\renewcommand{\arraystretch}{1}
\def\blue{\color{black}}
\newtheorem{theorem}{Theorem}[section]
\newtheorem{lemma}[theorem]{Lemma}
\newtheorem{proposition}[theorem]{Proposition}
\newtheorem{corollary}[theorem]{Corollary}
\newtheorem{definition}[theorem]{Definition}
\newtheorem{remark}[theorem]{Remark}
\newtheorem{assumption}[theorem]{Assumption}
\pgfplotsset{
    legend image code/.code={
        \draw [#1] (0cm,-0.1cm) rectangle (0.6cm,0.1cm);
    },
}
\def\BibTeX{{\rm B\kern-.05em{\sc i\kern-.025em b}\kern-.08em
    T\kern-.1667em\lower.7ex\hbox{E}\kern-.125emX}}
\pgfplotsset{compat=1.15}
\title{\Huge Consumer-based Carbon Costs: \\ Integrating Consumer Carbon Preferences in Electricity Markets}
\author{Wenqian Jiang, Aditya Rangarajan, \emph{and} Line Roald
}
\begin{document}

\maketitle

\begin{abstract}
An increasing share of consumers care about the carbon footprint of their electricity. This paper analyzes a method to integrate consumer carbon preferences in the electricity market-clearing by introducing consumer-based carbon costs and a carbon allocation mechanism. Specifically, consumers submit not only bids for power but also assign a cost to the carbon emissions incurred by their electricity use. The carbon allocation mechanism then assigns emissions from generation to consumers to minimize overall carbon costs. 
Our analysis starts from a previously proposed centralized market clearing formulation that maximizes social welfare under consideration of generation costs, consumer utility, and consumer carbon costs. 
We then derive an equivalent equilibrium formulation that incorporates a carbon allocation problem and gives rise to a set of carbon-adjusted electricity prices for both consumers and generators. We prove that the carbon-adjusted prices are higher for low-emitting generators and consumers with high carbon costs. Further, we prove that this new paradigm satisfies the same desirable market properties as standard electricity markets based on locational marginal prices, namely revenue adequacy and individual rationality, and demonstrate that a carbon tax on generators is equivalent to imposing a uniform carbon cost on consumers. Using a simplified three-bus system and the RTS-GMLC system, we illustrate that consumer-based carbon costs contribute to greener electricity market clearing both through generation redispatch and demand reductions.

\end{abstract}

\begin{IEEEkeywords}
Carbon-sensitive consumers, electricity market clearing, carbon costs
\end{IEEEkeywords}

\section{Introduction}
The electricity sector contributes approximately 30\% of total energy-related emissions \cite{renewables2022analysis}, making it a key target for decarbonization efforts 
through the adoption of low-carbon generation technologies. 
The availability of low-carbon electricity varies across time and space based on the regional generation mix and weather conditions, leading to spatio-temporal variations in the carbon footprint of electricity generation. These variations have inspired a growing group of \emph{carbon-sensitive} electricity consumers to adapt when (and in some cases where) they consume electricity to reduce carbon emissions from their electricity usage. 
Examples of such carbon-sensitive consumers range from residential customers to large corporations, such as hyperscale computing companies or producers of clean hydrogen.

Carbon-sensitive consumers may be willing to pay a premium for low-carbon electricity or, conversely, be less willing to pay for electricity from polluting generators.
However, current electricity markets minimize cost or maximize social welfare without explicitly accounting for emissions, potentially leading to 
economically efficient but environmentally suboptimal outcomes. 
Although certain carbon pricing mechanisms, such as the European Union's Emissions Trading System \cite{EUETs} and California's Cap-and-Trade Program \cite{CCTp}, have been implemented, they offer limited avenues for consumers to actively express their individual preferences for cleaner electricity. Instead, carbon-sensitive consumers must rely on indirect methods, such as those outlined in the Greenhouse Gas (GHG) Protocol \cite{GHGp2},
to estimate and mitigate their electricity-related emissions. Crucially, these approaches remain decoupled from real-time electricity markets, which limits their influence on real-time generation dispatch decisions and the resulting carbon emissions. 

To address the current disconnect between carbon accounting and electricity market clearing, we propose an electricity market clearing paradigm that inherently accounts for consumer carbon preferences and incorporates a carbon allocation mechanism to assign emissions from generators to consumers.

\subsection{Related Works}
To mitigate carbon emissions in power systems, researchers and policy-
makers have considered emissions trading schemes \cite{EUETs, CCTp} and carbon taxes on generators \cite{baranzini2000future,fischer2008environmental}.  
These mechanisms increase the cost of carbon-intensive generation, thus dispatching more low-carbon generators and
reducing system-wide emissions \cite{shukla2022climate,change2022mitigating}.
However, the increased generation costs are ultimately paid by consumers via elevated electricity prices \cite{nazifi2021carbon}. 
An analysis of the European emissions trading system showed that emission reductions primarily arise from reductions in demand due to higher electricity prices, rather than increased dispatch of clean generation \cite{chen2008implications, sijm2005co2}. 
Further, a uniform carbon tax or an emission trading scheme cannot account for differences in individual consumers' preference for low-carbon power.

{\blue 
As outlined above, the rise of carbon-sensitive consumers motivates a discussion on how to most effectively empower consumers to express and act on such preferences. 
However, while counting the carbon emissions of generators is a fairly straightforward task, assessing (and potentially penalizing) the carbon emissions of individual consumers requires allocating emissions from individual generators to individual consumers. How to do this in a fair and transparent manner is less obvious
and subject to ongoing debate. To outline these challenges, we review three parts of the literature related to carbon accounting and carbon emission reductions of electricity consumers, namely (1) carbon accounting under the GHG Protocol, (2) carbon-aware load shifting, and (3) proposals to integrate carbon allocation and carbon constraints on consumers within the electricity market clearing. We also note that the paper and our review are focused on the emissions associated with real-time electricity consumption, rather than the long-term evolution of the system. 
} 

\subsubsection{Carbon accounting under the GHG Protocol}
The GHG Protocol represents the industry standard for carbon accounting of electricity usage.
The GHG Protocol scope 2 guidance \cite{GHGp2} provides two mechanisms for carbon accounting of electricity use: market-based accounting and location-based accounting. With \emph{market-based accounting}, consumers can purchase renewable energy certificates (RECs) \cite{wiser1998green,jensen2002interactions,USEP}, often coupled with power purchase agreements (PPAs) \cite{facebook, Googleppa, Microsoftppa}, to claim that they are carbon-free. 
{\blue This method decouples physical electricity consumption from carbon emissions accounting, as consumers may utilize carbon-intensive electricity without punishment as long as they procure a sufficient amount of RECs. } 
With \emph{location-based accounting}, consumers calculate their CO$_2$ emissions after-the-fact by multiplying their electricity consumption with the average CO$_2$ emissions in the grid they are connected to. {\blue The average carbon emissions are typically calculated for the entire region and averaged in time (e.g. average yearly emissions), 
thus failing to account for the impact of consumers' specific locations and time of electricity consumption.
Ongoing updates to the GHG protocol seek to establish a closer relationship between physical consumption and carbon accounting, e.g. by implementing stricter requirements on temporal and geographical matching for RECs. However, carbon accounting under the GHG protocol would remain disconnected from the real-time electricity market clearing, giving rise to inefficiencies in adjusting consumers' load patterns.}

\subsubsection{Carbon-Aware Load Shifting}
Beyond emissions accounting, there has been significant interest in carbon-aware load shifting, where consumers use real-time carbon emission metrics to shift consumption to times or locations with a lower carbon footprint.
Carbon-aware load shifting has been studied in the context of data centers\cite{lindberg2020environmental, zheng2020mitigating, wiesner2021let, radovanovic2022carbon, acun2023carbon, cao2022toward}, hydrogen production \cite{ricks2023minimizing, zhang2020flexible}, and residential and commercial electricity usage \cite{present2024choosing, vader2022operational,goldstein2020carbon, tabors2021methodology}. 
Recently, the increasing availability of real-world carbon emissions data, provided directly by both grid
operators \cite{gridoperater1, gridoperater2} and third-party organizations \cite{carboncom2,electricitymap}, has enabled real-world implementations of carbon-aware load shifting. 
However, the choice of carbon intensity metric remains a key challenge to carbon-aware load shifting. Existing literature has proposed \cite{lindberg2020environmental, chen2024contributions, chen2024enhance} and 
compared \cite{lindberg2021guide, lindberg2022using, sukprasert2024implications, gorka2025electricityemissions} the impacts of choosing different carbon emissions metrics to guide the load shifting and also studied the effect of transmission constraints \cite{sofia2024carbon}. 
These studies have found that average carbon emissions, the most commonly used metric, may have a limited or negative impact on emissions \cite{gorka2025electricityemissions, jiang2025can}. Locational marginal emission, an alternative metric, is more effective for (small) load shifts, but it is highly volatile, and thus, not necessarily suitable for carbon accounting \cite{gorka2025electricityemissions}. 
However, carbon-aware load shifting has some fundamental limitations. 
In practice, load shifting would usually happen after the market has cleared, see, e.g., \cite{lindberg2022using, lindberg2021guide,gorka2025electricityemissions}, which may lead to suboptimal outcomes because consumers respond to outdated carbon signals from the previous market clearing. 
Further, load shifting can only impact the marginal generator and cannot cause a redispatch due to changes in the merit order the same way that, e.g. a carbon tax does.

\subsubsection{Carbon-Aware Electricity Market Clearing}
A more comprehensive approach to account for consumers' carbon preferences is to explicitly incorporate them into the electricity market. 
Recently, there have been a few different proposals modeling this. 
One strain of work models market clearing with carbon-aware consumers as a bilevel optimization problem, 
where either carbon-sensitive consumers adapt their consumption to minimize emissions, assuming full knowledge of the electricity market clearing problem of the ISO \cite{lindberg2022using}, or the ISO minimizes cost, assuming full information about the operating strategies, behaviors, and carbon preferences of carbon-sensitive consumers \cite{chen2024enhance}. 
However, neither of those two setups is practical due to the need for extensive information sharing.

Another strain of work has proposed direct integration of a carbon allocation mechanism in the market clearing itself. In \cite{chen2024carbon}, a carbon allocation mechanism based on the  carbon flow method \cite{kang2015carbon}, \textcolor{black}{is integrated into the market clearing, and consumers express their carbon preferences by defining explicit limits on their allowed carbon emissions.} 
One challenge of this method is that it is not clear if the proportional sharing principle \cite{kirschen1997contributions, bialek1996tracing}, which is essential to the carbon flow method, is the “right" definition of power and carbon flow tracing. Furthermore, it can be very challenging for consumers or the independent system operator (ISO) to define what the values of the carbon emission cap should be.

These drawbacks were addressed by our recent work in \cite{jiang2025greening}, which allows consumers to state their carbon preferences in the form of a carbon cost instead of an explicit cap \textcolor{black}{and proposes a less restrictive carbon allocation mechanism that minimizes the overall carbon cost, rather than adopting a (somewhat arbitrary) physical tracing approach}.
However, 
any model that deviates from the pure cost minimization of the standard market clearing will increase the cost of generation and system operation. Ref. \cite{jiang2025greening} does not explore how consideration of carbon costs would impact electricity price formation, who would pay for possible increases in generation cost, and whether the proposed market clearing still would satisfy desirable properties of existing electricity market clearing, such as revenue adequacy and individual rationality.

\subsection{Our Contributions}
This paper seeks to address these gaps by analyzing the proposed model in \cite{jiang2025greening} 
using
equilibrium modeling, which
has been widely applied in the analysis of existing electricity markets \cite{ventosa2005electricity, hobbs2007nash,gabriel2012complementarity}.
Our main contributions are fourfold: 
\\\emph{1)}
By comparing optimality conditions, we show that the proposed centralized model in \cite{jiang2025greening} gives rise to an \emph{equivalent equilibrium model} representing profit maximization problems of generators, consumers, and transmission owners; the price setting problem of the system operator; and an optimization problem solved by a carbon manager, who allocates power (and associated emissions) from generators to loads to minimize total carbon costs. 
\\\emph{2)} Based on this equilibrium model, we define a set of \emph{carbon-adjusted prices}, which serve as coordinating variables across the profit maximization problems of generators and consumers, leading to outcomes that are consistent with the centralized model. 
We prove that the carbon adjustments to the prices are such that consumers who submit higher carbon costs pay comparatively more for their electricity, while generators with lower emissions are paid comparatively more.
\\\emph{3)} We demonstrate that the proposed model satisfies similar \emph{desirable market properties} as current standard (carbon-agnostic) markets, namely revenue adequacy and individual rationality. 
\\\emph{4)} We show that current electricity market clearing with or without uniform carbon taxes on generators can be interpreted as special cases of our model.

These results are illustrated in case studies on a small three-bus system and the RTS-GMLC system \cite{barrows2019ieee}.


The remainder of the paper is organized as follows.  Section \ref{sec2} reviews our proposed market clearing model with consumer-based carbon costs from \cite{jiang2025greening}. Section \ref{sec3} derives its equivalent equilibrium formulation, while Section \ref{sec4} introduces and analyzes the carbon-adjusted electricity prices. Section \ref{secmp} investigates market properties of our model, and Section \ref{secmg} discusses its relationship to the standard (carbon-agnostic) model and a model with uniform carbon tax. Section \ref{secns} provides a numerical case study to illustrate theoretical results. Finally, Section \ref{sec5} concludes the paper. 

\section{Electricity Market Clearing with {\blue Consumer-Based} Carbon Cost}
\label{sec2}
Our recent work \cite{jiang2025greening} proposed a new model for electricity market clearing, which incorporates \emph{{\blue consumer-based} carbon costs} and a \emph{carbon-allocation mechanism}. The carbon costs allow consumers to define the cost they associate with carbon emissions from electricity usage.
The carbon allocation mechanism allocates power from generators to consumers, prioritizing the allocation of low-carbon power to consumers with high carbon costs. 
This model provides a new opportunity for consumers to express their carbon preferences in the electricity market clearing. 
We outline our model here and refer the reader to \cite{jiang2025greening} for more details.



We consider an electric power network with the set of buses, consumers, transmission lines, and generators denoted by $\mathcal{N}$, $\mathcal{D}$, $\mathcal{L}$ and $\mathcal{G}$, respectively. Let $\mathcal{G}_i\subset \mathcal{G}$ and $\mathcal{D}_i\subset \mathcal{D}$ be the subset of generators and loads connected to bus $i$, and $(i,j)\in\mathcal{L}$ denote the transmission line from bus $i$ to bus $j$. For notational clarity, we use subscripts $G$ and $D$ to differentiate variables or parameters for generators and consumers. 
Prior to the market clearing, each generator $g\in \mathcal{G}$ submits their generation costs $c_{G,g}$, maximum and minimum generation capacities $P_{G,g}^{\max}$ and $P_{G,g}^{\min}$ and emission factors $e_{G,g}$. Each consumer $d\in \mathcal{D}$ submits their maximum and minimum demand $P_{D,d}^{\max}$, and $P_{D,d}^{\min}$ and bids for electricity consumption $u_{D,d}$, reflecting the utility (or revenue) they derive from consuming electricity. Consumers also submit their carbon cost $c_{D,d}$, given in units of [\$/t$\rm{CO}_2$]. 
This cost may be directly tied to concrete costs such as \textcolor{black}{the purchase of renewable energy certificates or} carbon emission penalties from carbon taxes or cap-and-trade schemes\textcolor{black}{.
Alternatively, they may reflect} an internally defined “carbon cost”, quantifying how much revenue the consumer is willing to forgo to avoid carbon emissions{\blue 
\footnote{For example, if a consumer is willing to pay \$100/MWh for zero-carbon electricity but only \$50/MWh for electricity emitting 1 tCO$_2$/MWh, the implied carbon cost is \$50/tCO$_2$.}. 
While we acknowledge that there may be practical challenges associated with eliciting and verifying these carbon costs, both for the individual user and from a market power mitigation perspective, a thorough analysis of this is reserved for future work. }



Once bids for generation, consumption, and carbon are known, the ISO solves the problem (\ref{eq1coo}). This problem is a modified version of the DC optimal power flow (DC OPF) problem, where we expand the objective function to consider consumers' carbon costs and model the carbon allocation mechanism through a set of additional constraints that assign generated power (and associated emissions) from generators to loads. The problem is as follows:
\begin{subequations}
    \label{eq1coo}
    \begin{align}
    ~~&\max_{P_G, P_D, \theta, \pi, E_D}u_D^\intercal P_D-c_D^\intercal &&\!\!\!\!\!\!\!\!\!E_D -c_G^\intercal P_G\label{eq1coobj}\\
    \text{s.t.~~}
    & \sum_{d\in \mathcal{D}_i}P_{D,d}+\sum_{j:(i,j)\in \mathcal{L}}\beta_{ij}(\theta_i&&\!\!\!\!\!\!\!-\theta_j)=\sum_{g\in \mathcal{G}_i}P_{G,g},\notag\\
    &\qquad\qquad\qquad\qquad\qquad &&\!\!\!\!\!\!\!\!\!\!\!\forall i \in \mathcal{N}, \label{eq1coa}\quad\quad~:~\lambda_{P,i}\\
    &\beta_{ij}(\theta_i-\theta_j)\leq F_{ij}^{\rm{lim}},\quad&&\!\!\!\!\!\!\!\!\!\!\!\forall (i,j)\in \mathcal{L}, \quad:~\overline{\eta}_{L,ij}\label{eq1cob}\\ 
    &\beta_{ij}(\theta_i-\theta_j)\ge -F_{ij}^{\rm{lim}},\quad&&\!\!\!\!\!\!\!\!\!\!\!\forall (i,j)\in \mathcal{L},\quad:~ \underline{\eta}_{L,ij}\label{eq1coblim}\\
    &P_{G,g}^{\min}\leq P_{G,g}\leq P_{G,g}^{\max}, \quad&&\!\!\!\!\!\!\!\!\!\!\!\forall g \in \mathcal{G},\label{eq1oc}\quad :~\overline{\eta}_{G,g}, \underline{\eta}_{G,g}\\
    &P_{D,d}^{\min}\leq P_{D,d}\leq P_{D,d}^{\max},\quad&&\!\!\!\!\!\!\!\!\!\!\!\forall d \in \mathcal{D}, \label{eq1od}\quad:~\overline{\eta}_{D,d}, \underline{\eta}_{D,d}\\
    &\theta_{\rm{ref}} = 0, &&\!\!\!\!\!\!\!\!\!\!\!\label{eq1coe}\\
    & \sum_{d\in \mathcal{D}}\pi_{g,d} = P_{G,g},\ &&\!\!\!\!\!\!\!\!\!\!\!\forall g \in \mathcal{G},\quad:~\lambda_{G,g}  \label{eq1cof}\\
    & \sum_{g\in\mathcal{G}}\pi_{g,d} = P_{D,d},\ &&\!\!\!\!\!\!\!\!\!\!\!\forall d \in \mathcal{D},\quad :~\lambda_{D,d}\label{eq1cog}\\
    & \sum_{g\in\mathcal{G}}e_{G,g}\pi_{g,d} = E_{D,d}, \ 
    &&\!\!\!\!\!\!\!\!\!\!\!\forall d \in \mathcal{D}, \quad :~\lambda_{E,d}\label{eq1cho}\\
    & \pi_{g,d}\geq 0,\ &&\!\!\!\!\!\!\!\!\!\!\!\forall g \in \mathcal{G}, \ \forall d \in \mathcal{D} \label{eq1cooi}.
\end{align}
\end{subequations}
Here, the optimization variables are the generation dispatch $P_G = \{P_{G,g}: g \in \mathcal{G}\}$, the voltage angle $\theta = \{\theta_i: i\in \mathcal{N}\}$, the flexible load $P_D = \{P_{D,d}:d \in \mathcal{D}\}$, the generation-load allocation matrix $\pi=\{\pi_{g,d}:g \in \mathcal{G}, d \in \mathcal{D}\}$ reflecting the amount of power assigned from each generator to each load, and the total carbon emission for each consumer $E_D = \{E_{D,d}:d\in \mathcal{D}\}$. We describe each part below.

\subsubsection*{Carbon-aware objective function} The objective function (\ref{eq1coobj}) maximizes carbon-aware social welfare which includes the utility term, carbon cost term, and generation cost term. 


\subsubsection*{DC OPF constraints} The constraints (\ref{eq1coa})-(\ref{eq1coe}) are similar to those of the standard DC OPF.
Constraint (\ref{eq1coa}) ensures that nodal power balance constraints are met, with $\beta_{ij}\in \mathbb{R}$ denoting the susceptance value of the transmission line $(i,j)$ from bus $i$ to bus $j$. Constraints (\ref{eq1cob}) and (\ref{eq1coblim}) are the transmission line limits, where $F_{ij}^{\rm{lim}}$ represents the transmission capacity, which we assume is the same in both directions. Constraints (\ref{eq1oc}) and (\ref{eq1od}) enforce the limits on generation capacity and demand flexibility, while constraint (\ref{eq1coe}) sets the voltage angle at the reference bus to zero. The variables $\lambda_{P,i}, \overline{\eta}_{L,ij}, \underline{\eta}_{L,ij}, \overline{\eta}_{G,g}, \underline{\eta}_{G,g},\overline{\eta}_{D,d}, \underline{\eta}_{D,d}, \lambda_{G,g}, \lambda_{D,d}, \lambda_{E,d}$ after the colon at each constraint represent dual variables (or Lagrange multipliers) for corresponding constraints. 

\subsubsection*{Carbon allocation mechanism}
The remaining constraints (\ref{eq1cof})-(\ref{eq1cooi}) 
assign a portion of the power, and associated emissions, from each generator $g$ to each load $d$, represented by the power allocation $\pi_{g,d}$. 
Constraint (\ref{eq1cof}) ensures that the total amount of power allocated from the generator $g$ to all the loads $d\in \mathcal{D}$ equals the actual power dispatched from this generator, while (\ref{eq1cog}) enforces that the sum of power allocated to a given load is equal to its total power consumption. 
Constraint (\ref{eq1cho}) then computes the total emission for each consumer $E_{D,d}$ based on emission factors and the amount of power obtained from different generators.
Constraint (\ref{eq1cooi}) ensures that all allocations are non-negative, which guarantees that all loads will have non-negative emissions assuming non-negative generator emission factors.

We note that this carbon allocation mechanism itself assumes that any generated power can be assigned to any load and does not explicitly consider physical constraints such as grid topology, power flow patterns or congestion in the system. These physical characteristics of grid operations are still accounted for by the DC power flow constraints \eqref{eq1coa}-\eqref{eq1coblim}. 

\section{Equivalent Equilibrium Formulation}
\label{sec3}


Problem (\ref{eq1coo}) is a centralized, system-level model, where a system operator gathers information about costs, carbon emissions, and availability from generators and consumers and clears the market with the goal of maximizing carbon-aware social welfare. However, it is not clear whether the proposed model satisfies desired market properties, e.g., produces prices that incentivize individual actors to comply with the market outcome. To enable analysis of these aspects, we next show that there exists an equivalent equilibrium problem that represents the optimization problems solved by individual actors in the market, connected through a set of coordinating variables. 

Our analysis is motivated by the equilibrium modeling of current (carbon-agnostic) electricity markets based on locational marginal pricing \cite{ventosa2005electricity,hobbs2007nash}. Before diving into more details about the derivation of equivalent equilibrium model of our carbon-aware market clearing model, we briefly summarize some key points established in the analysis of existing markets. 
Equilibrium models of existing markets typically consider 
generators, consumers, transmission owners, and the ISO as participants. 
The 
generators, consumers, and transmission owners aim to maximize their profits given electricity prices as input parameters, while the ISO determines the market clearing price. 
It is commonly assumed that the participants are, between them, playing a noncooperative game, and thus the optimal solution to this game is defined as a Nash equilibrium, corresponding to a situation where no participant can improve their outcomes by unilaterally changing their decisions. Under a price-taking assumption (no strategic bidding or market power), the equilibrium problem is equivalent to a standard market-clearing problem, i.e., the optimality conditions of the two problems are the same. Therefore, solutions from the equilibrium problem are the same as those of the single central problem and thus also maximize social welfare. The equilibrium problem can be rewritten as a mixed complementarity problem; 
more details, examples, and formulations can be found in \cite{ferris1997engineering,gabriel2012complementarity, ferris2025optimizing}.

Inspired by prior results for standard electricity markets, we seek to establish an equivalent equilibrium formulation for the centralized problem (\ref{eq1coo}). To achieve this, we first describe the optimality conditions of Problem (\ref{eq1coo}). Using these optimality conditions, we then define an equilibrium problem that includes profit maximization for generators, consumers, and transmission owners; the price-setting problem of the ISO; and an optimization problem solved by a carbon
manager, who allocates carbon emissions from generation to loads. 
\subsection{Optimality Conditions}
\label{optcon}
\subsubsection{Dual problem} To obtain the optimality conditions, we consider the dual problem of (\ref{eq1coo}), which, based on convex optimization theory \cite{boyd2004convex}, is given as follows:
\begin{subequations}
\label{dualp}
    \begin{align}
        \min_{\lambda,\eta}~~ &\sum_{g\in \mathcal{G}}\overline{\eta}_{G,g}P_{G,g}^{\max}-\sum_{g \in \mathcal{G}}\underline{\eta}_{G,g}P_{G,g}^{\min}+\sum_{d\in \mathcal{D}}\overline{\eta}_{D,d}P_{D,d}^{\max},\notag\\
        &-\sum_{d\in \mathcal{D}}\underline{\eta}_{D,d}P_{D,d}^{\min}+\sum_{(i,j)\in \mathcal{L}}(\overline{\eta}_{L,ij}+\underline{\eta}_{L,ij})F_{i,j}^{\lim}\\
        \text{s.t.~~} &
    -c_{G,g}+\lambda_{P,i:g\in\mathcal{G}_i} -\overline{\eta}_{G,g}+\underline{\eta}_{G,g}+\lambda_{G,g}=0,\notag\\
    &\qquad\qquad\qquad\qquad\qquad \forall g\in \mathcal{G},~~:~P_{G,g} \label{s1}\\
    \nonumber
    &
    u_{D,d}-\lambda_{P,i:d\in\mathcal{D}_i} -\overline{\eta}_{D,d}+\underline{\eta}_{D,d}+\lambda_{D,d}=0,\\
    &\qquad\qquad\qquad\qquad\qquad \forall d\in \mathcal{D},~~:~P_{D,d}\label{s2}\\
    &
    \sum_{j:(i,j)\in \mathcal{L}}\beta_{ij}(\lambda_{P,j}-\lambda_{P,i}-\overline{\eta}_{L,ij}+\underline{\eta}_{L,ij})=0,\notag\\
    &\qquad\qquad\qquad\qquad\qquad \forall i \in \mathcal{N}/\rm{ref}, ~~ :~\theta_i\label{s3}\\
    \nonumber
    &-\lambda_{G,g}-\lambda_{D,d}-\lambda_{E,d}e_{G,g}\le0,  \\
    &\qquad\qquad\qquad\qquad\qquad \forall g\in\mathcal{G}, d\in \mathcal{D}, ~~ :~\pi_{g,d}\label{s5}\\
    &-c_{D,d}+\lambda_{E,d}=0,~~\!\quad \forall d\in \mathcal{D},~~ :~E_{D,d}\label{s6}\\
    &\overline{\eta}_{L,ij}\ge 0,
    \quad \underline{\eta}_{L,ij}\ge 0, \ \forall(i,j)\in \mathcal{L},\label{d1}\\
    &\overline{\eta}_{G,g}\ge 0,
    \quad \underline{\eta}_{G,g}\ge 0,~~\!\forall g\in \mathcal{G},\label{d2}\\
    &\overline{\eta}_{D,d}\ge 0,
    \quad\underline{\eta}_{D,d}\ge 0, ~~\!\forall d\in \mathcal{D}.\label{d3}
    \end{align}
\end{subequations}
Note that the variables in this problem are $\lambda, \eta$, while the primal variables $P_{G,g}, P_{D,d}, \theta_i, \pi_{g,d}, E_{D,d}$ from Problem (\ref{eq1coo}) are the Lagrange multipliers. Besides, $\lambda_{P,i:g\in\mathcal{G}_i}$ (or $\lambda_{P,i:d\in\mathcal{D}_i}$) represents the dual variable value of constraint (\ref{eq1coa}) on the bus $i$, to which the generator $g$ (or the consumer $d$) is connected.


\subsubsection{KKT conditions} Given the primal and dual problems (\ref{eq1coo}), (\ref{dualp}), we can state the Karush-Kuhn–Tucker (KKT) optimality conditions for our problem. 

\textit{Primal feasibility}: The optimal solutions must satisfy all constraints in the primal problem, i.e., (\ref{eq1coa})-(\ref{eq1cooi}).

\textit{Dual feasibility}: Similarly, the optimal dual solutions must satisfy all constraints in the dual problem, i.e., (\ref{s1})-(\ref{d3}).

\textit{Complementary slackness}: The complementary slackness conditions for the inequality constraints are given by
\begin{subequations}
\label{eqcs}
    \begin{align}
        &\overline{\eta}_{L,ij}\cdot\left(F_{ij}^{\rm{lim}}-\beta_{ij}(\theta_i-\theta_j)\right)=0,  &&\!\!\!\forall(i,j)\in \mathcal{L},\label{c1}\\
        &\underline{\eta}_{L,ij}\cdot\left(F_{ij}^{\rm{lim}}+\beta_{ij}(\theta_i-\theta_j)\right)=0, &&\!\!\!\forall (i,j)\in \mathcal{L},\label{c2}\\
        &\overline{\eta}_{G,g}\cdot(P_{G,g}^{\rm{max}}-P_{G,g})=0,
        &&\!\!\!\forall g\in\mathcal{G}, \label{c7}\\
        &\underline{\eta}_{G,g}\cdot(P_{G,g}-P_{G,g}^{\rm{min}})=0,
        &&\!\!\!\forall g\in\mathcal{G},\label{c3}\\
        &\overline{\eta}_{D,d}\cdot(P_{D,d}^{\rm{max}}-P_{D,d})=0,
        &&\!\!\!\forall d \in \mathcal{D}, \label{c6}\\
        &\underline{\eta}_{D,d}\cdot(P_{D,d}-P_{D,d}^{\rm{min}})=0,
        &&\!\!\!\forall d \in \mathcal{D},\label{c4}\\
        & \pi_{g,d}\cdot\left(-\lambda_{G,g}-\lambda_{D,d}-\lambda_{E,d}e_{G,g}\right)=0, &&\!\!\!\forall  g\!\in\!\mathcal{G}, d\!\in\! \mathcal{D}\label{c5}.
    \end{align}
\end{subequations}

\subsection{Equilibrium Problem}

We derive an equilibrium formulation for problem \eqref{eq1coo} by assigning optimality conditions of problem \eqref{eq1coo}, i.e. \eqref{eq1coa}-\eqref{eq1cooi}, \eqref{s1}-\eqref{d3}, and \eqref{c1}-\eqref{c5}, to different market actors as described below. For each market actor, we then derive a corresponding primal problem. Since we used the optimality conditions of problem \eqref{eq1coo} to derive the equilibrium formulation, the two formulations of problem \eqref{eq1coo} are equivalent and thus have the same optimal solutions. 

\textit{Generators}: We first define the profit maximization problem for an individual generator $g\in\mathcal{G}$ whose optimality conditions are given by (\ref{eq1oc}), (\ref{s1}), (\ref{d2}), (\ref{c7})-(\ref{c3}).
We define the generator output $P_{G,g}$ as the primal variable, while $\overline{\eta}_{G,g}, \underline{\eta}_{G,g}$ are the dual variables and $\lambda_{P,i:g\in\mathcal{G}_i},~\lambda_{G,g}$ are input parameters arising from the price setter and carbon manager problems (described below). 
This leads to the following primal optimization problem for each generator $g\in\mathcal{G}$,
\begin{subequations}
    \label{eq22-2}
    \begin{align}
    \max_{P_{G,g}} ~~ &(\lambda_{P,i:g\in\mathcal{G}_i}+\lambda_{G,g}-c_{G,g})\cdot P_{G,g} \label{eq22a}\\
    \text{s.t.~~}  &P_{G,g}^{\min}\leq P_{G,g}\leq P_{G,g}^{\max}.\label{eq22b}
\end{align}
\end{subequations}
This problem is similar to the profit maximization problem for generators in the standard markets, except the objective \eqref{eq22a} includes the dual variable $\lambda_{G,g}$ associated with the carbon allocation constraint for generators (\ref{eq1cof}).

\textit{Consumers}: Each consumer $d\in \mathcal{D}$ aims to maximize their (carbon-dependent) utility by solving a problem whose optimality conditions are given by \eqref{eq1od}, \eqref{s2}, \eqref{d3}, \eqref{c6}-\eqref{c4}
We define $P_{D,d}$ as the primal variable and $\overline{\eta}_{D,d}, \underline{\eta}_{D,d}$ as dual variables, while $\lambda_{P,i:d\in\mathcal{D}_i},~\lambda_{D,d}$ are input parameters arising from the price setter and carbon manager problems. The primal utility maximization problem can then be defined as follows:
 \begin{subequations}
     \label{eq33-2}
     \begin{align}
    \max_{P_{D,d}} ~~&(u_{D,d}-(\lambda_{P,i: d\in \mathcal{D}_i}-\lambda_{D,d}))\cdot P_{D,d} \label{eq33a1}\\
    \text{s.t.~~} &P_{D,d}^{\min}\leq P_{D,d}\leq P_{D,d}^{\max}.\label{eq33a}
\end{align}
 \end{subequations}
This problem is similar to the utility maximization problem of consumers in standard markets, except the objective \eqref{eq33a1} also includes $\lambda_{D,d}$, the dual variable corresponding to the carbon allocation constraint for consumers (\ref{eq1cog}).

\textit{Transmission owner}: 
Transmission owners
maximize their profit by buying power at one bus and selling it back at another. The optimization problem for the transmission owner is thus based on the primal and dual constraints associated with the transmission limits and power flow constraints, with optimality conditions  (\ref{eq1cob})-(\ref{eq1coblim}), (\ref{eq1coe}), (\ref{s3}), (\ref{d1}), (\ref{c1})-(\ref{c2}). We define $\theta$ as the primal variable and $\overline{\eta}_{L}, \underline{\eta}_{L}$ as dual variables, while $\lambda_{P}$ are the input variables arising from the price setter problem. These optimality conditions give rise to the following problem: 
\begin{align}
    \max_{\theta}\ ~~&\sum_{(i,j)\in \mathcal{L}}(\lambda_{P,j}-\lambda_{P,i})\beta_{ij}(\theta_i-\theta_j)\label{eqtr-2}\\
    \text{s.t.~~}  & {\rm Constraints}\text{ } \eqref{eq1cob}-\eqref{eq1coblim}, \text{ } \eqref{eq1coe}.\notag
\end{align}

\textit{Price setter}: The ISO solves the price setter problem to enforce the nodal power balance constraint for each bus $i\in \mathcal{N}$. This problem can be represented as the complementarity constraint (\ref{eq1coa}) (corresponding to the optimality condition of the price setter) as follows:
\begin{align}
\label{eqmc-2}
    & \sum_{d\in \mathcal{D}_i}P_{D,d}+\sum_{j:(i,j)\in \mathcal{L}}\beta_{ij}(\theta_i-\theta_j)=\sum_{g\in \mathcal{G}_i}P_{G,g}, \quad :\lambda_{P,i}.
\end{align}
In this problem, we define $\lambda_{P,i}$ as a variable, while $P_{G,g}$, $P_{D,d}$, and $\theta$ are input parameters arising from the generator, consumer, and the transmission owner problems, respectively. 

\textit{Carbon manager}:
The carbon manager aims to minimize the total carbon cost by optimally allocating carbon emissions from generators to consumers. 
{\blue Note that the 
``carbon manager'' can be understood as a conceptual actor and does not necessarily represent a new institutional entity. However, the task of the carbon manager, which is to assign a share of electric power and associated emissions from each generator to each load, would have to be implemented as part of the electricity market clearing. In practice, this could, for example, be treated as part of the ISO responsibilities.}
The optimality conditions for the carbon allocation are given by (\ref{eq1cof})-(\ref{eq1cooi}), (\ref{s5})-(\ref{s6}), (\ref{c5}).
We define $\pi_{g,d}$ and $E_{D,d}$ as the primal variables, and $\lambda_{G,g}$, $\lambda_{D,d}$, and $\lambda_{E,d}$ as the dual variables, while $P_{G,g}$ and $P_{D,d}$ are input parameters arising from the generator and consumer problems. This results in the following carbon allocation primal problem:
\begin{subequations}
      \label{eqadp-2}
      \begin{align}
    \max_{\pi, E_D} ~~&-c_D^\intercal E_D&&\!\!\!\!\!\!\!\!\!\!\!\label{eqcarbmanobj}\\
    \text{s.t.~~} & \sum_{d\in \mathcal{D}}\pi_{g,d} = P_{G,g},\qquad &&\!\!\!\!\!\!\!\!\!\!\!\forall g \in \mathcal{G}, \quad :~\lambda_{G,g}\label{eqadpa}\\
    & \sum_{g\in\mathcal{G}}\pi_{g,d} = P_{D,d},\qquad && \!\!\!\!\!\!\!\!\!\!\!\forall d \in \mathcal{D}, \quad :~\lambda_{D,d}\label{eqadpb}\\
    & \sum_{g\in\mathcal{G}}e_{G,g}\pi_{g,d} = E_{D,d}, \qquad &&\!\!\!\!\!\!\!\!\!\!\! \forall d \in \mathcal{D},\label{eqadpc}\\
    & \pi_{g,d}\geq 0,\qquad &&\!\!\!\!\!\!\!\!\!\!\! \forall g \in \mathcal{G}, \ \forall d \in \mathcal{D}.\label{eqadpd}  
\end{align}
\end{subequations}
The objective function \eqref{eqcarbmanobj} minimizes the cost of carbon allocation to consumers. The dual variables $\lambda_{G,g}$ and $\lambda_{D,d}$ for constraints (\ref{eqadpa}) and (\ref{eqadpb}), become inputs to the generator and consumer optimization problems, respectively.

An important conclusion of the carbon manager problem is that for a given generation and load dispatch $P_G, P_D$, it is cost optimal to assign the lowest emitting generation to the consumers with the highest carbon cost.


\section{Carbon-Adjusted Prices}
\label{sec4}
We next analyze how the integration of consumer-based carbon costs and a carbon allocation mechanism impact prices for consumers and generators. 
\subsection{Carbon-Adjusted Prices}
From the equilibrium model, we observe that generators are paid $\lambda_{P,i:g\in \mathcal{G}_i}+\lambda_{G,g}$ and loads pay $\lambda_{P,i:d\in \mathcal{D}_i}-\lambda_{D,d}$ for their electricity. Thus, we propose to define carbon-adjusted electricity prices as follows.
\begin{definition}[Carbon-Adjusted Prices]
    The carbon-adjusted prices are defined as
    \begin{align}
        &\lambda_{P,i:g\in \mathcal{G}_i}+\lambda_{G,g} \quad\text{for generators $g\in\mathcal{G}$,} \notag \\
        &\lambda_{P,i:d\in \mathcal{D}_i}-\lambda_{D,d} \quad\text{for consumers $d\in\mathcal{D}$.} \notag 
    \end{align}
\end{definition}
Note that the carbon-adjusted prices may be different for generators and consumers who are at the same bus $i$, as the carbon adjustments $\lambda_{G,g},~\lambda_{D,d}$ may differ even when $\lambda_{P,i}$ is the same for both. 

It is tempting to interpret $\lambda_{P}$ directly as the nodal electricity price and assume that the variables $\lambda_{G},~\lambda_{D}$ represent the value of the carbon emissions.
However, the exact values of $\lambda_{P},~\lambda_{G}$ and $\lambda_{D}$ are non-unique,
as shifting them by a $d\lambda\in \mathbb{R}$ gives rise to a new set of optimal dual variables,
\begin{subequations}
\label{eqms}
    \begin{align}&\tilde{\lambda}_{P,i}=\lambda_{P,i}+d\lambda,\label{eq9a}\\
    &\tilde{\lambda}_{G,g} = \lambda_{G,g} - d\lambda,\\
    &\tilde{\lambda}_{D,d} = \lambda_{D,d} + d\lambda,
    \end{align}
\end{subequations}
This is because the combinations of $\lambda_{P},~\lambda_{G,g}$ and $\lambda_{D,d}$ in the dual problem \eqref{dualp}, and in the objectives of problems \eqref{eq22-2}, \eqref{eq33-2}, and \eqref{eqtr-2} lead to cancellations of $d\lambda$. 
Since the optimal values of the electricity price and carbon adjustments are non-unique, we analyze their relative size rather than their absolute values.


\subsection{Ordering of Carbon-Adjustments}
Considering {\blue consumer-based} carbon costs may result in a generation dispatch that is different from the standard market clearing, leading to higher generation costs and increased prices to consumers. It is therefore important to understand who will pay for this increased cost. While variations in $\lambda_P$ remain due to transmission congestion, we next prove how the consumer-based carbon cost $c_D$ and generation emission factors $e_G$ impact the carbon-adjustments $\lambda_G, \lambda_D$, and thus the total cost of electricity.
Our main result, stated in Theorem \ref{theoremodv}, shows that the generators with lower carbon emissions $e_G$ will receive a higher carbon adjustment and be paid comparatively more for their generation, whereas consumers with a higher carbon cost $c_D$ will receive a lower carbon adjustment and pay comparatively more for their (lower carbon) electricity. 
This suggests that our proposed market clearing is fair in the sense that the most emitting generators are penalized with lower payments, and that the consumers that submit higher carbon costs contribute more  to cover the increases in generation cost that arise from prioritizing low carbon generation.
\begin{theorem}[Ordering of Carbon-Adjustments]
\label{theoremodv}
For a set of generators $\mathcal{G}$ with increasing emission factors $e_{G,(1)} \leq e_{G,(2)} \leq \cdots \leq e_{G,(|\mathcal{G}|)}$, the corresponding generator carbon-adjustments will be decreasing,
$$ \lambda_{G,(1)}\geq\lambda_{G,(2)}\geq\cdots \geq\lambda_{G,(|\mathcal{G}|)}. $$
For a set of consumers $\mathcal{D}$ with decreasing carbon-costs $c_{D,(1)}\geq c_{D,(2)}\geq \cdots \geq c_{D,(|\mathcal{D}|)}$, 
the corresponding consumer carbon-adjustments will be increasing, 
$$ \lambda_{D,(1)}\leq\lambda_{D,(2)}\leq \cdots \leq \lambda_{D,(|\mathcal{D}|)} $$
\end{theorem}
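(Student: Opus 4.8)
The plan is to work directly from the Karush--Kuhn--Tucker conditions attached to the carbon manager problem \eqref{eqadp-2}, namely the stationarity conditions \eqref{s5}--\eqref{s6}, the complementary slackness \eqref{c5}, and the allocation constraints \eqref{eq1cof}--\eqref{eq1cog}. First I would use \eqref{s6}, which forces $\lambda_{E,d}=c_{D,d}$, to rewrite \eqref{s5} as the uniform \emph{dual-feasibility inequality}
\begin{equation}
\lambda_{G,g}+\lambda_{D,d}+c_{D,d}\,e_{G,g}\ \ge\ 0 \qquad\text{for all } g\in\mathcal{G},\ d\in\mathcal{D},
\label{eq:pfeas}
\end{equation}
and to rewrite \eqref{c5} as the statement that \eqref{eq:pfeas} holds with \emph{equality} whenever $\pi_{g,d}>0$. (Equivalently, \eqref{eqadp-2} is a transportation problem with the separable ``product'' cost $c_{D,d}\,e_{G,g}$, and these are precisely its optimality/complementarity conditions.) Constraints \eqref{eq1cof}--\eqref{eq1cog} together with $\pi\ge 0$ then guarantee that every dispatched generator ($P_{G,g}>0$) sends a positive allocation to at least one consumer, and every active consumer ($P_{D,d}>0$) receives a positive allocation from at least one generator; I would phrase the theorem for these dispatched/active participants.

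The generator ordering then follows from a two-line comparison. Fix $g_1,g_2$ with $e_{G,g_1}\le e_{G,g_2}$, with $g_2$ dispatched, and pick $d^{\ast}$ with $\pi_{g_2,d^{\ast}}>0$. Complementary slackness at $(g_2,d^{\ast})$ gives $\lambda_{G,g_2}+\lambda_{D,d^{\ast}}+c_{D,d^{\ast}}e_{G,g_2}=0$, while \eqref{eq:pfeas} at $(g_1,d^{\ast})$ gives $\lambda_{G,g_1}+\lambda_{D,d^{\ast}}+c_{D,d^{\ast}}e_{G,g_1}\ge 0$; subtracting yields $\lambda_{G,g_1}-\lambda_{G,g_2}\ge c_{D,d^{\ast}}(e_{G,g_2}-e_{G,g_1})\ge 0$, using $c_{D,d^{\ast}}\ge 0$ and $e_{G,g_2}\ge e_{G,g_1}$. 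Applying this to every pair of the sorted list gives $\lambda_{G,(1)}\ge\cdots\ge\lambda_{G,(|\mathcal{G}|)}$. The consumer ordering is the mirror image: for $d_1,d_2$ with $c_{D,d_1}\ge c_{D,d_2}$ and $d_1$ active, pick $g^{\ast}$ with $\pi_{g^{\ast},d_1}>0$, combine complementary slackness at $(g^{\ast},d_1)$ with \eqref{eq:pfeas} at $(g^{\ast},d_2)$, and obtain $\lambda_{D,d_1}-\lambda_{D,d_2}\le e_{G,g^{\ast}}(c_{D,d_2}-c_{D,d_1})\le 0$, where now the sign comes from the assumed non-negativity of the emission factors $e_G\ge 0$.

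Two points require care. The minor one is the non-uniqueness of the dual variables $\lambda_P,\lambda_G,\lambda_D$ recorded in \eqref{eqms}: it is harmless because the admissible shift adds the same constant $-d\lambda$ to \emph{every} $\lambda_{G,g}$ and the same constant $+d\lambda$ to \emph{every} $\lambda_{D,d}$, so it leaves both orderings invariant, which is what makes the statement well posed. The main obstacle is the degenerate case of an undispatched generator ($P_{G,g}=0$) or an inactive consumer ($P_{D,d}=0$): then all the corresponding allocations vanish, no complementary-slackness equality is available, so $\lambda_{G,g}$ is only lower-bounded through \eqref{eq:pfeas} and need not lie below the adjustments of cleaner dispatched units. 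I would handle this either by stating the theorem for dispatched/active participants only, or by additionally invoking the generator (resp.\ consumer) stationarity condition \eqref{s1} (resp.\ \eqref{s2}) to exhibit a choice of $\lambda_{G,g}$ (resp.\ $\lambda_{D,d}$) consistent with the ordering. Apart from this, the proof is just the elementary arithmetic above once the KKT system is read off correctly.
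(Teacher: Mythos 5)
Your proposal is correct and takes essentially the same route as the paper's own proof: combine the dual-feasibility condition obtained from \eqref{s5}--\eqref{s6} with the complementary slackness condition \eqref{c5} for a generator--consumer pair sharing a positive allocation $\pi_{g,d}>0$, subtract the two relations, and use $c_D\ge 0$, $e_G\ge 0$ to get the pairwise ordering, then sweep through the sorted lists. Two of your refinements are worth noting but do not change the verdict: your sign convention $\lambda_{G,g}+\lambda_{D,d}+c_{D,d}e_{G,g}\ge 0$ (with the equality anchored at the dispatched higher-emission generator, resp.\ the active higher-carbon-cost consumer) is the one actually consistent with the dual constraints \eqref{s5}--\eqref{s6} as stated, whereas the paper's proof writes \eqref{ineq-cond} with the opposite sign and anchors at the cleaner generator, and your explicit caveat about undispatched generators and inactive consumers (for which no complementarity equality exists and the ordering is only guaranteed for a suitable choice of the non-unique duals) is a degeneracy the paper's argument likewise covers only by implicitly assuming the anchor participant has a positive allocation.
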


\begin{proof}
The dual constraints \eqref{s5}, \eqref{s6} require that
\begin{equation}
    \lambda_{G,g}+\lambda_{D,d} +c_{D,d}e_{G,g} \geq 0,\label{ineq-cond}
\end{equation} 
for all generator-consumer pairs $(g,d)~\in\mathcal{G}\times\mathcal{D}$.
For any generator-consumer pair with a non-zero power allocation $\pi_{g,d}>0$, the complementary slackness condition \eqref{c5} requires
\begin{equation}\lambda_{G,g}+\lambda_{D,d} +c_{D,d}e_{G,g} = 0,\label{eq-cond}
\end{equation} 
Further, from the constraints \eqref{eq1cof}, \eqref{eq1cog}, we know that all generators $g$ with $P_{G,g}>0$ must have $\pi_{g,q}>0$ for at least one load $q\in\mathcal{D}$. Thus, \eqref{eq-cond} holds for at least one load $q\in\mathcal{D}$. A similar argument can be made for all loads with $P_{D,d}>0$.

We next prove the ordering of the generator carbon adjustments.
Consider two generators $h$ and $k$ with carbon emission factors $e_{G,h} \leq e_{G,k}$, and a consumer $\ell$ that is served by generator $k$, i.e. $\pi_{k,\ell}>0$. We then have that
\begin{align}
    &\lambda_{G,k}+\lambda_{D,\ell} = -c_{D,\ell}e_{G,k} \label{eq:gen-cond1},\\
    &\lambda_{G,h}+\lambda_{D,\ell} \ge -c_{D,\ell}e_{G,h} .\label{eq:gen-cond2}
\end{align}
By subtracting \eqref{eq:gen-cond1} from \eqref{eq:gen-cond2}, we obtain the expression 
\begin{equation}
    \lambda_{G,h}-\lambda_{G,k} \geq c_{D,\ell}(e_{G,k}- e_{G,h})\geq 0 \label{eq:gen-cond-result},
\end{equation}
where the last inequality follows from the fact that $c_{D,\ell}\geq 0$ and 
$e_{G,h} \leq e_{G,k}$. From \eqref{eq:gen-cond-result}, we can thus conclude that all generators $k$ with an emissions factor $e_{G,k} \geq e_{G,h}$  will have a smaller carbon adjustment $\lambda_{G,k}\leq\lambda_{G,h}$ compared to generator $h$. Repeating this analysis for all generators $g$ in order of increasing emissions factor $e_{G,g}$, we can show that the generator with the smallest emissions factor $e_{G,g}$ will have the highest $\lambda_{G,g}$, the generator with the second smallest emissions factor $e_{G,g}$ will have the second highest $\lambda_{G,g}$, and so on. Thus, a set of generators $\mathcal{G}$ with \emph{increasing} emission factors $e_{G,(1)} \leq e_{G,(2)} \leq \cdots \leq e_{G,(|\mathcal{G}|)}$ will have decreasing carbon adjustments
$$ \lambda_{G,(1)}\geq\lambda_{G,(2)}\geq\cdots \geq\lambda_{G,(|\mathcal{G}|)}.$$

Using similar arguments, we can prove that  
a set of consumers $\mathcal{D}$ with \emph{decreasing} $c_{D,(1)}\geq c_{D,(2)}\geq \cdots \geq c_{D,(|\mathcal{D}|)}$ will have increasing carbon adjustments.
$$ \lambda_{D,(1)}\leq\lambda_{D,(2)}\leq \cdots \leq \lambda_{D,(|\mathcal{D}|)}. \qedhere$$ 
\end{proof}

Using the relationship \eqref{eq:gen-cond1} for two generators serving the same load or two loads are served by the same generator, we can derive specific differences in carbon adjustments.
\begin{corollary}
\label{lemma555}
Let $g_1, g_2$ be two different generators with carbon intensities $e_{G,g_1}, e_{G,g_2}$ serving consumer $l$, i.e. $\pi_{g_1,l}>0$ and $\pi_{g_2,l}>0$.  
The difference in their carbon adjustments $\lambda_{G,g_1}, \lambda_{G,g_2}$ is given by:
$$
    \lambda_{G,g_2}-\lambda_{G,g_1}=c_{D,l}(e_{G,g_1}-e_{G,g_2}).
$$
\end{corollary}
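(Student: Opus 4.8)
The plan is to invoke the equality version of the complementary-slackness condition twice, once for each generator–consumer pair, and subtract. Concretely, since $\pi_{g_1,l}>0$, condition \eqref{c5} forces the corresponding dual constraint \eqref{s5} to hold with equality, which (using $\lambda_{E,d}=c_{D,d}$ from \eqref{s6}) is exactly \eqref{eq-cond} specialized to $(g_1,l)$:
\begin{equation}
\lambda_{G,g_1}+\lambda_{D,l}+c_{D,l}\,e_{G,g_1}=0. \notag
\end{equation}
The same argument applied to the pair $(g_2,l)$, which also has $\pi_{g_2,l}>0$, gives
\begin{equation}
\lambda_{G,g_2}+\lambda_{D,l}+c_{D,l}\,e_{G,g_2}=0. \notag
\end{equation}

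The key observation is that the consumer carbon-adjustment $\lambda_{D,l}$ appearing in both equations is identical, since it is the same consumer $l$ in both cases. Subtracting the first equation from the second therefore cancels $\lambda_{D,l}$ and yields
\begin{equation}
\lambda_{G,g_2}-\lambda_{G,g_1}+c_{D,l}\bigl(e_{G,g_2}-e_{G,g_1}\bigr)=0, \notag
\end{equation}
which rearranges to the claimed identity $\lambda_{G,g_2}-\lambda_{G,g_1}=c_{D,l}(e_{G,g_1}-e_{G,g_2})$.

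There is essentially no obstacle here: the result is an immediate consequence of the complementary-slackness machinery already assembled for Theorem \ref{theoremodv}, and in fact it sharpens inequality \eqref{eq:gen-cond-result} to an equality precisely because the second generator also serves load $l$ (so \eqref{eq:gen-cond2} becomes an equality rather than an inequality). The only points worth stating explicitly in the write-up are (i) why both pairs give equalities — namely complementary slackness with strictly positive allocation — and (ii) that the shared term $\lambda_{D,l}$ is what makes the subtraction clean. An analogous one-line remark could note the dual statement for two consumers served by a common generator, giving $\lambda_{D,d_2}-\lambda_{D,d_1}=e_{G,g}(c_{D,d_1}-c_{D,d_2})$, though this is not required by the stated corollary.
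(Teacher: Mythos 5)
Your proposal is correct and follows exactly the route the paper intends: the paper derives this corollary by applying the equality relationship \eqref{eq-cond} (i.e., \eqref{eq:gen-cond1}) to both pairs $(g_1,l)$ and $(g_2,l)$, which hold because $\pi_{g_1,l}>0$ and $\pi_{g_2,l}>0$ force complementary slackness \eqref{c5} to bind, and then subtracting so that $\lambda_{D,l}$ cancels. Your added remarks about why the inequality of Theorem \ref{theoremodv} sharpens to an equality are consistent with the paper's reasoning and require no changes.
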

\begin{corollary}
\label{lemma3}
Let $d_1, d_2$ be two different consumers with carbon costs $c_{D,d_1}, c_{D,d_2}$ who are served by the same generator $r$, i.e. $\pi_{r,d_1}>0$ and $\pi_{r,d_2}>0$. 
The difference in their carbon adjustments $\lambda_{D,d_1}, \lambda_{D,d_2}$ is given by:
$$
    \lambda_{D,d_2}-\lambda_{D,d_1}=e_{G,r}(c_{D,d_1}-c_{D,d_2}).
$$
\end{corollary}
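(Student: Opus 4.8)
The plan is to reuse the tight-constraint identity already established inside the proof of Theorem~\ref{theoremodv}. Recall that for any generator--consumer pair $(g,d)$ with strictly positive allocation $\pi_{g,d}>0$, the complementary slackness condition \eqref{c5}, combined with dual feasibility \eqref{s5}--\eqref{s6}, forces the dual inequality \eqref{ineq-cond} to hold with equality, i.e.\ \eqref{eq-cond}: $\lambda_{G,g}+\lambda_{D,d}+c_{D,d}\,e_{G,g}=0$. This single fact is all that is needed.

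First I would instantiate \eqref{eq-cond} at the two pairs $(r,d_1)$ and $(r,d_2)$, both of which are legitimate because $\pi_{r,d_1}>0$ and $\pi_{r,d_2}>0$ by hypothesis. This produces the two scalar equations
\begin{align}
\lambda_{G,r}+\lambda_{D,d_1}+c_{D,d_1}\,e_{G,r}&=0,\\
\lambda_{G,r}+\lambda_{D,d_2}+c_{D,d_2}\,e_{G,r}&=0.
\end{align}
Subtracting the first from the second cancels the common term $\lambda_{G,r}$ and leaves $\lambda_{D,d_2}-\lambda_{D,d_1}+e_{G,r}(c_{D,d_2}-c_{D,d_1})=0$, which rearranges to the claimed identity $\lambda_{D,d_2}-\lambda_{D,d_1}=e_{G,r}(c_{D,d_1}-c_{D,d_2})$. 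The twin statement, Corollary~\ref{lemma555}, follows by the mirror-image argument: fix a consumer $\ell$ with $\pi_{g_1,\ell}>0$ and $\pi_{g_2,\ell}>0$, write \eqref{eq-cond} for $(g_1,\ell)$ and $(g_2,\ell)$, and subtract.

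Since the corollary is obtained from a single subtraction of two tight KKT conditions, I do not anticipate any real obstacle. The only point requiring care is the justification that \eqref{eq-cond}---the \emph{equality}, not merely the inequality \eqref{ineq-cond}---applies to both pairs; this is exactly what the strict positivity of $\pi_{r,d_1}$ and $\pi_{r,d_2}$ guarantees via \eqref{c5}. As a sanity check, the identity is consistent with Theorem~\ref{theoremodv}: when $c_{D,d_1}\ge c_{D,d_2}$ and $e_{G,r}\ge 0$ we obtain $\lambda_{D,d_2}\ge\lambda_{D,d_1}$, recovering the stated monotonicity of consumer carbon-adjustments among consumers drawing power from a common generator.
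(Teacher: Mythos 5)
Your proof is correct and matches the paper's intended argument: the corollary is stated as an immediate consequence of the tight KKT condition \eqref{eq-cond} applied to both pairs $(r,d_1)$ and $(r,d_2)$, followed by subtraction to cancel $\lambda_{G,r}$. No gaps; the justification via $\pi_{r,d_1}>0$, $\pi_{r,d_2}>0$ and complementary slackness \eqref{c5} is exactly the right one.
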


These corollaries have two interesting implications. First, loads with the same carbon cost $c_D$ (or generators with the same emissions intensity $e_G$) will have the same carbon adjustment $\lambda_D$ (or $\lambda_G$). Second, generators who serve a load with zero carbon cost $c_D=0$ (or loads served by generators with zero emissions $e_G=0$) will have the same $\lambda_D$ (or $\lambda_G$).


\section{Market-Clearing Properties}
\label{secmp}
\textcolor{black}{We consider} four desirable properties of market-clearing mechanisms: market efficiency, incentive compatibility, revenue adequacy, and individual rationality. 
\textcolor{black}{Leveraging} the equilibrium model, we show that our proposed model satisfies properties similar to standard electricity markets based on locational marginal prices (LMPs), as discussed below.

\subsection{Revenue Adequacy}
\textcolor{black}{We first prove that} our market clearing mechanism is revenue adequate, that is 
the payment to the ISO from the consumers is always higher than or equal to their total payments to generators, the transmission owner, and the carbon manager. 
\begin{proposition}
\label{prop}
Our model satisfies revenue adequacy.
\end{proposition}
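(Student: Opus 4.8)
The plan is to write out the independent system operator's (ISO) net cash position under the carbon-adjusted prices, group the terms according to whether they involve the nodal prices $\lambda_P$ or the carbon adjustments $\lambda_G,\lambda_D$, and then show that the ``energy'' group cancels exactly as in standard locational-marginal-price markets while the ``carbon'' group is nonnegative by virtue of the complementary slackness condition \eqref{c5}.

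Concretely, I would first record the cash flows implied by the equilibrium model. By Definition~IV.1, consumers pay the ISO a total of $\sum_{d\in\mathcal D}(\lambda_{P,i:d\in\mathcal D_i}-\lambda_{D,d})P_{D,d}$, the generators collect $\sum_{g\in\mathcal G}(\lambda_{P,i:g\in\mathcal G_i}+\lambda_{G,g})P_{G,g}$, the transmission owner collects its profit $\sum_{(i,j)\in\mathcal L}(\lambda_{P,j}-\lambda_{P,i})\beta_{ij}(\theta_i-\theta_j)$ from \eqref{eqtr-2}, and the carbon manager collects the optimal value $-c_D^\intercal E_D$ of \eqref{eqcarbmanobj}; revenue adequacy is the statement that the first quantity is at least the sum of the other three. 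Subtracting and collecting the $\lambda_P$-terms, the ISO's net surplus splits into an ``energy'' part $\sum_{d}\lambda_{P,i:d\in\mathcal D_i}P_{D,d}-\sum_{g}\lambda_{P,i:g\in\mathcal G_i}P_{G,g}-\sum_{(i,j)\in\mathcal L}(\lambda_{P,j}-\lambda_{P,i})\beta_{ij}(\theta_i-\theta_j)$ and a ``carbon'' part $-\sum_d\lambda_{D,d}P_{D,d}-\sum_g\lambda_{G,g}P_{G,g}+\sum_d c_{D,d}E_{D,d}$.

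For the energy part, I would multiply the nodal power balance \eqref{eq1coa} by $\lambda_{P,i}$ and sum over $i\in\mathcal N$; reorganizing the line-flow sum $\sum_i\lambda_{P,i}\sum_{j:(i,j)\in\mathcal L}\beta_{ij}(\theta_i-\theta_j)$ over the transmission lines reproduces the familiar identity that the energy merchandising surplus equals the congestion rent, i.e.\ it equals the transmission owner's profit from \eqref{eqtr-2}, so the energy part vanishes (one may additionally note it equals the nonnegative quantity $\sum_{(i,j)\in\mathcal L}(\overline\eta_{L,ij}+\underline\eta_{L,ij})F_{ij}^{\mathrm{lim}}$ using \eqref{c1}--\eqref{c2} and \eqref{d1}, but this is not needed). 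For the carbon part, I would sum the complementary slackness conditions \eqref{c5} over all $(g,d)\in\mathcal G\times\mathcal D$, substitute $\lambda_{E,d}=c_{D,d}$ from \eqref{s6}, and use the allocation equalities \eqref{eq1cof}, \eqref{eq1cog}, \eqref{eq1cho} to collapse the sums over $\pi$, obtaining $\sum_g\lambda_{G,g}P_{G,g}+\sum_d\lambda_{D,d}P_{D,d}+\sum_d c_{D,d}E_{D,d}=0$. Hence $-\sum_d\lambda_{D,d}P_{D,d}-\sum_g\lambda_{G,g}P_{G,g}=\sum_d c_{D,d}E_{D,d}$ and the carbon part equals $2\sum_d c_{D,d}E_{D,d}$, which is nonnegative since $c_{D,d}\ge 0$ and, by \eqref{eq1cho}, \eqref{eq1cooi} and $e_{G,g}\ge 0$, $E_{D,d}=\sum_g e_{G,g}\pi_{g,d}\ge 0$. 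Adding the two parts yields the claim.

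The main obstacle I anticipate is bookkeeping rather than mathematical depth: one must fix precisely the direction and magnitude of every payment in the equilibrium model — especially what the carbon manager is taken to receive, since its objective \eqref{eqcarbmanobj} is a cost rather than a revenue — and must express the energy part in exactly the transmission-line convention of \eqref{eq1coa} and \eqref{eqtr-2} so that it telescopes to the congestion rent. Once that accounting is pinned down, the only genuinely new step is the complementary-slackness identity for the carbon terms, after which nonnegativity is immediate from the sign restrictions on $c_D$ and $e_G$; under a different convention for the carbon manager's receipts the carbon contribution becomes $\sum_d c_{D,d}E_{D,d}$ or $0$ rather than $2\sum_d c_{D,d}E_{D,d}$, but it remains nonnegative in every case, so the conclusion is robust.
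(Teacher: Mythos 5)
Your proof is correct and follows essentially the same route as the paper: you multiply the nodal balance \eqref{eq1coa} by $\lambda_{P,i}$ and sum so that the energy and congestion terms cancel against the transmission owner's profit, and you then combine complementary slackness \eqref{c5} with \eqref{s6} and the allocation constraints \eqref{eq1cof}--\eqref{eq1cho} to obtain $\sum_{g}\lambda_{G,g}P_{G,g}+\sum_{d}\lambda_{D,d}P_{D,d}+\sum_{d}c_{D,d}E_{D,d}=0$, which is exactly the paper's key identity. The only difference is the bookkeeping convention for the carbon manager's receipts --- under the paper's intended convention (the ISO pays the carbon manager $\sum_{d}c_{D,d}E_{D,d}$) the surplus is exactly zero (budget balance) rather than your $2\sum_{d}c_{D,d}E_{D,d}$ --- and, as you note, this does not affect the revenue-adequacy conclusion.
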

\begin{proof}
To prove revenue adequacy, we need to show that, at optimum, the following inequality holds,
    \begin{align}
        &\sum_{d\in \mathcal{D}}(\lambda_{P,i:d\in \mathcal{D}_i}-\lambda_{D,d})P_{D,d}
        -\sum_{g\in \mathcal{G}}(\lambda_{P,i:g\in \mathcal{G}_i}+\lambda_{G,g})P_{G,g}
        \notag\\
        &\qquad-\!\sum_{i\in \mathcal{N}}\lambda_{P,i}\!\!\!\! 
        \sum_{j, (i,j)\in \mathcal{L}}\!\!\!\beta_{ij}(\theta_j-\theta_i)-\sum_{d\in \mathcal{D}}c_{D,d}E_{D,d}
        \!\geq\! 0.\label{eq20}
    \end{align}
To achieve this, we first multiply $\lambda_{P,i}$ on both sides of the power balance constraint (\ref{eq1coa}), sum across all buses and rearrange terms to obtain
    \begin{align}
        &\sum_{d\in \mathcal{D}}\lambda_{P,i:d\in \mathcal{D}_i}P_{D,d}-\sum_{g\in \mathcal{G}}\lambda_{P,i:g\in \mathcal{G}_i}P_{G,g} \notag\\
        &\qquad+\sum_{i\in \mathcal{N}}\lambda_{P,i}\sum_{j:(i,j)\in \mathcal{L}}\beta_{ij}(\theta_i-\theta_j)=0.\label{eq22u}
    \end{align}
By subtracting (\ref{eq22u}) from (\ref{eq20}) and rearranging the remaining terms, we get the following condition for revenue adequacy:
\begin{equation}
\sum_{d\in \mathcal{D}}\lambda_{D,d}P_{D,d}+\sum_{g\in \mathcal{G}}\lambda_{G,g}P_{G,g}+\sum_{d\in \mathcal{D}}c_{D,d}E_{D,d}\leq 0.\label{eq23}
\end{equation}
Using (\ref{eq1cof}), (\ref{eq1cog}) to express $P_{D,d},~P_{G,g}$ and $E_{D,d}$ in terms of $\pi$, the left-hand side of \eqref{eq23} becomes
\begin{align}
\nonumber
&\sum_{d\in \mathcal{D}}\lambda_{D,d}\!\sum_{g\in \mathcal{G}}\!\pi_{g,d}+\!\!\sum_{g\in \mathcal{G}}\lambda_{G,g}\!\sum_{d\in \mathcal{D}}\!\pi_{g,d}+\sum_{d\in \mathcal{D}}c_{D,d}\sum_{g\in \mathcal{G}}e_{G,g}\pi_{G,d}
\\
&\qquad=\sum_{g\in \mathcal{G}}\sum_{d\in \mathcal{D}}\pi_{g,d}(\lambda_{D,d}+\lambda_{G,g}+e_{G,g}c_{D,d}) = 0.
\end{align}
The last equality arises from the fact that $\pi_{g,d}\ge0$ and, for $\pi_{g,d}>0$, the complementary slackness condition \eqref{c5} requires that
    $\lambda_{G,g}+\lambda_{D,d} + c_{D,d}e_{G,g} = 0\label{equ26}$.
This shows that the proposed market-clearing mechanism is revenue adequate, and, in fact, budget balanced since \eqref{eq20} will be satisfied with equality. 
\end{proof}
\subsection{Individual Rationality}
The individual rationality property \textcolor{black}{implies} that generators and consumers \textcolor{black}{have an incentive to participate in the market (rather than opting out), which requires that they} always recover their operational costs and do not incur a loss. We prove \textcolor{black}{that} our model \textcolor{black}{satisfies cost recovery for the case where $P_{G}^{\min}=P_{D}^{\min}=0$}.
\begin{proposition}
\label{prop}
Our model satisfies individual rationality given $P_{G}^{\min}=P_{D}^{\min}=0$.
\end{proposition}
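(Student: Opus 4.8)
The plan is to establish the two halves of individual rationality separately---that at the equilibrium every generator earns nonnegative profit and every consumer obtains nonnegative (carbon-aware) surplus---and in each case to reduce the claim to feasibility of the ``do nothing'' point in that actor's own profit-maximization problem, which is exactly where the hypothesis $P_{G}^{\min}=P_{D}^{\min}=0$ enters.

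For generators: at the equilibrium, generator $g$ is paid its carbon-adjusted price times its output, so its profit is $(\lambda_{P,i:g\in\mathcal{G}_i}+\lambda_{G,g})P_{G,g}-c_{G,g}P_{G,g}$, which is precisely the objective \eqref{eq22a} of the generator problem \eqref{eq22-2} evaluated at the equilibrium dispatch. Since $P_{G,g}^{\min}=0$, the point $P_{G,g}=0$ is feasible for \eqref{eq22b} with objective value $0$; as $P_{G,g}$ maximizes \eqref{eq22a}, the equilibrium profit is $\ge 0$, so the generator recovers its generation cost. The same mechanism applies to the consumer problem \eqref{eq33-2}: with $P_{D,d}^{\min}=0$, taking $P_{D,d}=0$ is feasible with objective value $0$, hence the equilibrium value $(u_{D,d}-(\lambda_{P,i:d\in\mathcal{D}_i}-\lambda_{D,d}))P_{D,d}$ is $\ge 0$.

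The one extra step is to confirm that this nonnegative quantity really is the consumer's carbon-aware surplus, i.e., its utility net of the carbon disutility $c_{D,d}E_{D,d}$ and of everything it pays. Here I would use the complementary slackness \eqref{c5} together with \eqref{s6}: for each generator with $\pi_{g,d}>0$ they force $\lambda_{D,d}=-\lambda_{G,g}-c_{D,d}e_{G,g}$; multiplying by $\pi_{g,d}$, summing over $g\in\mathcal{G}$, and invoking \eqref{eq1cog} and \eqref{eq1cho} gives the identity $\lambda_{D,d}P_{D,d}=-\sum_{g\in\mathcal{G}}\lambda_{G,g}\pi_{g,d}-c_{D,d}E_{D,d}$. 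Substituting it into the objective of \eqref{eq33-2} rewrites that objective as $\big(u_{D,d}P_{D,d}-c_{D,d}E_{D,d}\big)-\big(\lambda_{P,i:d\in\mathcal{D}_i}P_{D,d}+\sum_{g\in\mathcal{G}}\lambda_{G,g}\pi_{g,d}\big)$, i.e., carbon-aware utility minus the nodal electricity payment and the carbon-adjustment transferred to generators through the carbon manager. So the consumer's realized surplus equals the (nonnegative) optimal value of its problem, which gives individual rationality.

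I expect the main obstacle to be accounting rather than any genuine inequality: one must take care to count the carbon-cost term $c_{D,d}E_{D,d}$ exactly once and on the consumer's side of the ledger---the identity coming from \eqref{c5} is what makes this unambiguous and consistent with the revenue-adequacy bookkeeping of \eqref{eq23}---and one must verify that $P_{G}^{\min}=P_{D}^{\min}=0$ is precisely the hypothesis that makes the zero fallback feasible. Without it, the nodal balance \eqref{eq1coa} could force an inflexible generator (or load) to operate even when its carbon-adjusted price lies below its cost (above its utility), yielding a loss and breaking individual rationality, just as must-run units do in standard markets based on locational marginal prices. Once these points are fixed, the proof is the two-line feasibility argument above.
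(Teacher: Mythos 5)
Your proof is correct and follows essentially the same route as the paper: individual rationality is reduced to nonnegativity of each actor's equilibrium objective in \eqref{eq22-2} and \eqref{eq33-2}, which holds because $P_{G}^{\min}=P_{D}^{\min}=0$ makes the zero point feasible with objective value zero (your version actually states the optimality argument more explicitly than the paper's terse ``set $P_{G,g}=P_{D,d}=0$'' phrasing). The extra accounting step via \eqref{c5} and \eqref{s6} relating $\lambda_{D,d}P_{D,d}$ to $c_{D,d}E_{D,d}$ is correct but not needed for the paper's statement, since the paper defines individual rationality directly through the two objective-value inequalities.
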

\begin{proof}
\textcolor{black}{Cost recovery for individual participants} is ensured when we can guarantee a positive objective function value for all generators and consumers,
\begin{align}
    (\lambda_{P,i:g\in \mathcal{G}_i}+\lambda_{G,g}-c_{G,g})P_{G,g}&\ge 0, \label{eq:gen_cost_recovery}\\
    (u_{D,d}-\lambda_{P,i:d\in \mathcal{D}_i}+\lambda_{D,d})P_{D,d}&\ge 0. \label{eq:load_cost_recovery}
\end{align}
If $P_{G}^{\min}=P_{D}^{\min}=0$ for all generators and consumers, we can always set $P_{G,g}=P_{D,d}=0$ to satisfy \eqref{eq:gen_cost_recovery}, \eqref{eq:load_cost_recovery}.
\end{proof}
Note that if $P_{G}^{\min}>0$ or $P_{D}^{\min}>0$, individual rationality is not guaranteed (as is the case in current electricity markets). 

\subsection{Market Efficiency and Incentive Compatibility}
Since the optimality conditions of the centralized model and the equilibrium model are the same, the two problems are equivalent. 
This implies that if the dual variables  $\lambda_{G},~\lambda_{D}$, and $\lambda_{P}$ are used to define prices, 
the solution to the centralized model
aligns with the solution to the individual problems solved by each market participant. This suggests that our proposed model is efficient {\blue under the assumption of a competitive market (without strategic bidding or market power)}, as no player has an incentive to unilaterally deviate from the socially optimal outcome. 

An incentive-compatible market incentivizes players to \textcolor{black}{truthfully} bid their marginal costs \textcolor{black}{even in the presence of market power. However, according to Hurwicz impossibility theorem \cite{hurwicz1972, myerson1983efficient, BANERJEE1994397}, no market mechanism where individuals directly report their private information (e.g. their costs) and the market mechanism's rules determine the outcome can simultaneously ensure efficiency, individual rationality, cost recovery, and incentive compatibility. Since our market design is efficient, individually rational, and ensures cost recovery, it follows that it cannot be incentive compatible.} Specifically, players can exercise market power and increase their profit by strategically adapting their costs (generator costs, consumer bids for power or carbon costs) or the quantities (generation or consumption limits) 
offered to the market. This lack of incentive compatibility suggests that the \textcolor{black}{overall} market is not
inherently efficient and that market power mitigation is needed, as is the case in current electricity markets. {\blue A specific counterexample of incentive compatibility in existing electricity markets (without carbon costs) can be found in Example 3 of Section IV in \cite{tang2013nash} or the two-bus example of Section VI.A in \cite{xu2015efficient}.} 

 \section{Special Versions of Carbon Cost Model}
 \label{secmg}
We next show that the proposed carbon cost model generalizes two commonly implemented market models.


\begin{proposition}[Equivalence to standard market clearing]
    If $c_{D,d}=0$ for all $d\in \mathcal{D}$, our model is equivalent to the standard (i.e. carbon agnostic) market clearing.
\end{proposition}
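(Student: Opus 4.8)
The plan is to show that when $c_D=0$ the carbon-aware problem \eqref{eq1coo} decouples into the ordinary DC~OPF over the ``physical'' variables $(P_G,P_D,\theta)$ together with a block of constraints that only restrict the auxiliary variables $\pi$ and $E_D$, which no longer enter the objective. First I would substitute $c_D=0$ into \eqref{eq1coobj}: the objective immediately collapses to $u_D^\intercal P_D-c_G^\intercal P_G$, i.e.\ exactly the carbon-agnostic social welfare. Since $\pi$ and $E_D$ now appear only in the carbon-allocation constraints \eqref{eq1cof}--\eqref{eq1cooi} and nowhere in the objective, the maximization can be performed first over $(P_G,P_D,\theta)$, and the residual feasibility problem in $(\pi,E_D)$ need only admit \emph{some} solution.

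The key step is therefore to verify that for every $(P_G,P_D,\theta)$ satisfying the DC~OPF constraints \eqref{eq1coa}--\eqref{eq1coe}, the constraints \eqref{eq1cof}--\eqref{eq1cooi} are consistent. This is a transportation-type feasibility question: a nonnegative matrix $\pi$ with prescribed row sums $P_{G,g}$ and column sums $P_{D,d}$ exists if and only if $\sum_{g\in\mathcal{G}}P_{G,g}=\sum_{d\in\mathcal{D}}P_{D,d}$ and all these quantities are nonnegative. The balance identity follows by summing the nodal power balance \eqref{eq1coa} over all buses $i\in\mathcal{N}$, whereupon the line-flow terms $\beta_{ij}(\theta_i-\theta_j)$ cancel in pairs; nonnegativity follows from $P_{G,g}\ge P_{G,g}^{\min}\ge 0$ and $P_{D,d}\ge P_{D,d}^{\min}\ge 0$ under the standing assumption of nonnegative lower limits (the same hypothesis already used for individual rationality). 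Given any such $\pi$, the emissions $E_{D,d}$ are pinned down uniquely by \eqref{eq1cho}. Hence the projection of the feasible set of \eqref{eq1coo} onto $(P_G,P_D,\theta)$ equals the feasible set of standard DC~OPF, and the two problems share the same optimal value, dispatch, and angles.

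It then remains to check that the \emph{prices} coincide. From the dual condition \eqref{s6}, $c_D=0$ forces $\lambda_{E,d}=0$, and \eqref{s5} together with the complementary slackness \eqref{c5} gives $\lambda_{G,g}+\lambda_{D,d}=0$ whenever $\pi_{g,d}>0$; invoking the shift freedom \eqref{eqms} one may select the representative with $\lambda_{G,g}=\lambda_{D,d}=0$ for all $g,d$. With this choice the carbon-adjusted prices of the Definition reduce to $\lambda_{P,i}$ for both generators and consumers, and the stationarity conditions \eqref{s1}--\eqref{s3} reduce verbatim to the KKT conditions of standard DC~OPF, so $\lambda_P$ is exactly the locational marginal price vector. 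The main obstacle I anticipate is the bookkeeping around the non-uniqueness of $(\lambda_P,\lambda_G,\lambda_D)$: one must argue that the \emph{set} of carbon-adjusted price vectors our model produces contains (indeed coincides with, after the admissible shift) the standard LMP vector, rather than claiming a spurious uniqueness. A secondary point is simply stating the $P^{\min}\ge 0$ assumption explicitly so the transportation argument goes through.
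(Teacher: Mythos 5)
Your proposal takes essentially the same route as the paper: set $c_D=0$, observe that the carbon-cost term vanishes, and drop the allocation constraints \eqref{eq1cof}--\eqref{eq1cooi} because a feasible (and hence automatically optimal) allocation of generation to load always exists, recovering the standard DC~OPF. You additionally fill in the transportation-feasibility detail (total generation equals total demand from summing \eqref{eq1coa}, plus nonnegativity of dispatch) that the paper merely asserts, and the extra price-coincidence discussion, while not required by the statement, is consistent with the paper's framework.
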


\begin{proof}
In the standard carbon-agnostic model, consumers submit no information about their carbon costs, i.e., all $c_D = 0$. 
If we set all consumer carbon costs $c_D = 0$ in our model, the carbon cost term $c_D E_D$ in the objective trivially becomes zero. Further, since there will always exist a feasible allocation of generation to load satisfying \eqref{eq1cof}-\eqref{eq1cooi}, which is also optimal, we can omit the carbon allocation constraints \eqref{eq1cof}-\eqref{eq1cooi} from our problem, leading to the following model,
\begin{align}
    \label{eqcon-2}
        \max_{P_G, P_D, \theta}~~&u_{D}^\intercal P_{D}-c_{G}^\intercal P_{G}\\
    \text{s.t.~~} & \rm{Constraints} \ (\ref{eq1coa})-(\ref{eq1coe}),\notag
\end{align}
which is the standard carbon-agnostic model.
\end{proof}

\begin{proposition}[Equivalence to carbon tax on generation]
 If all consumers $d\in \mathcal{D}$ have the same carbon cost $c_{D,d}=c_{tax}$, our model is equivalent to introducing the carbon tax on generators with a tax rate $c_{tax}$.   
\end{proposition}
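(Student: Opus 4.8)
The plan is to follow the same strategy as the proof of the previous proposition: substitute the hypothesis $c_{D,d}=c_{tax}$ for all $d\in\mathcal{D}$ into the centralized problem \eqref{eq1coo}, eliminate the carbon-allocation variables from the objective, and verify that what remains is precisely a DC~OPF in which every generator's marginal cost is shifted from $c_{G,g}$ to $c_{G,g}+c_{tax}\,e_{G,g}$. Concretely, the target model I want to land on is
\begin{align*}
\max_{P_G,P_D,\theta}~~& u_D^\intercal P_D-(c_G+c_{tax}\,e_G)^\intercal P_G\\
\text{s.t.~~}& \text{Constraints } \eqref{eq1coa}\text{--}\eqref{eq1coe},
\end{align*}
i.e.\ the carbon-agnostic model \eqref{eqcon-2} augmented with a per-ton carbon tax $c_{tax}$ levied on generation, and the goal is to show \eqref{eq1coo} reduces to it.

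The first key step is to rewrite the carbon-cost term in the objective \eqref{eq1coobj}. Using the emission-accounting constraint \eqref{eq1cho}, $E_{D,d}=\sum_{g\in\mathcal{G}}e_{G,g}\pi_{g,d}$, summing over $d\in\mathcal{D}$, interchanging the order of summation, and applying the generator allocation constraint \eqref{eq1cof} yields $\sum_{d\in\mathcal{D}}E_{D,d}=\sum_{g\in\mathcal{G}}e_{G,g}\sum_{d\in\mathcal{D}}\pi_{g,d}=\sum_{g\in\mathcal{G}}e_{G,g}P_{G,g}$. Hence, when all $c_{D,d}=c_{tax}$,
$$ c_D^\intercal E_D=c_{tax}\sum_{d\in\mathcal{D}}E_{D,d}=c_{tax}\sum_{g\in\mathcal{G}}e_{G,g}P_{G,g}=c_{tax}\,e_G^\intercal P_G, $$
so the objective \eqref{eq1coobj} becomes $u_D^\intercal P_D-(c_G+c_{tax}\,e_G)^\intercal P_G$ and no longer depends on $\pi$ or $E_D$.

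Next I would argue that the carbon-allocation block \eqref{eq1cof}--\eqref{eq1cooi} can be dropped without affecting the optimal $(P_G,P_D,\theta)$ or the optimal value. Since $\pi$ and $E_D$ have been removed from the objective, these constraints only certify that the chosen dispatch admits a feasible nonnegative allocation; but, exactly as in the proof of the previous proposition, such an allocation always exists for any $(P_G,P_D)$ satisfying the nodal balance \eqref{eq1coa}, which, summed over buses, forces $\sum_{g\in\mathcal{G}}P_{G,g}=\sum_{d\in\mathcal{D}}P_{D,d}$ and thus makes \eqref{eq1cof}--\eqref{eq1cooi} a nonempty transportation polytope (with $E_D$ then pinned by \eqref{eq1cho}). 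Deleting them leaves exactly the target model, establishing the equivalence at the level of the clearing problem; one can additionally remark that the equilibrium decomposition of Section~\ref{sec3} then collapses to the standard LMP equilibrium on the tax-adjusted costs, and that Corollary~\ref{lemma3} forces all consumer carbon-adjustments $\lambda_{D,d}$ to coincide, consistent with the tax being borne on the generation side. The only point requiring care — and the step I would write out most carefully — is precisely this elimination of \eqref{eq1cof}--\eqref{eq1cooi}: I must be sure both that a feasible allocation exists for \emph{every} candidate dispatch and that no optimizer of \eqref{eq1coo} is created or destroyed by removing those constraints; everything else is bookkeeping on the objective.
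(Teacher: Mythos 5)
Your proposal is correct and follows essentially the same route as the paper: substitute $\mathbf{1}^\intercal E_D = e_G^\intercal P_G$ via \eqref{eq1cho} and \eqref{eq1cof}, fold the carbon cost into the generator cost as $c_{tax}\,e_G^\intercal P_G$, and drop the now-redundant allocation block to recover the taxed DC~OPF \eqref{eqcon-5}. Your extra care in justifying the elimination of \eqref{eq1cof}--\eqref{eq1cooi} (nonemptiness of the transportation polytope whenever total generation equals total load) is a point the paper treats only briefly, so it is a welcome but not divergent addition.
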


\begin{proof}
Introduce a carbon tax $c_{tax}$ on generators in the standard model \eqref{eqcon-2} leads to a change in generation cost, i.e.,
    \begin{align}
    \label{eqcon-5}
\max_{P_G, P_D, \theta}~~&\ u_D^\intercal P_D-(c_{tax} e_G + c_G)^\intercal P_G\\
    \text{s.t.~~} & \rm{Constraints} \ (\ref{eq1coa})-(\ref{eq1coe})\notag.
\end{align}


The consumer-based carbon cost model (\ref{eq1coo}) with all $c_D=c_{tax}$ is given by
\begin{align}
\max_{P_G, P_D, \theta, \pi, E_D}~~&\ u_D^\intercal P_D-c_{tax}\mathbf{1}^\intercal E_D -c_G^\intercal P_G\label{eqcon-3}\\
    \text{s.t.~~} & \rm{Constraints} \ (\ref{eq1coa})-(\ref{eq1cooi})\notag,
\end{align}
where $\mathbf{1}\in \mathbb{R}^{|\mathcal{D}|}$ is a vector with all elements equal to $1$. 
We next use constraints (\ref{eq1cho}) and (\ref{eq1cof}) to substitute $E_D$, i.e.
\begin{equation}
    \mathbf{1}^\intercal E_D \!=\!\! 
    \sum_{d\in\mathcal{D}} \!E_{D,d} 
    \!=\!\! \sum_{d\in\mathcal{D}}\sum_{g\in\mathcal{G}} e_{G,g}\pi_{g,d}
    \!=\!\! \sum_{g\in\mathcal{G}} e_{G,g}P_{G,g}
    \!=
    e_{G}^\intercal P_{G}.
\end{equation}
Considering this substitution, we no longer need the variables $E_D$ and $\pi$ or the constraints (\ref{eq1cog}) and (\ref{eq1cooi}) to define them.
We can therefore restate our problem as  
\begin{align}
    \notag
\max_{P_G, P_D, \theta}~~&\ u_D^\intercal P_D-c_{tax} e_G^\intercal P_G -c_G^\intercal P_G\\
    \text{s.t.~~} & \rm{Constraints} \ (\ref{eq1coa})-(\ref{eq1coe}),\notag
\end{align}
which is the standard market with a carbon tax \eqref{eqcon-5}.
\end{proof}

\section{Case study}
\label{secns}

We next provide a numerical case study to demonstrate our theoretical results and illustrate how the proposed model impacts market clearing results. 
The optimization problem is solved using both GAMs \cite{GAMs} and Julia \cite{bezanson2017julia}. 
\begin{table}[b]
\footnotesize
    \centering
    \renewcommand{\arraystretch}{1.1}
    \setlength{\tabcolsep}{4pt}
    \caption{\small Three bus system parameters.}
    \label{tab: 3-bus-params}
    \begin{tabular}{c|ccc|cccc}
        \hline
        \multirow{2}{*}{\textbf{Bus}} & \multicolumn{3}{c|}{\textbf{Consumers}} & \multicolumn{4}{c}{\textbf{Generators}} \\ \cline{2-8}
        & $\boldsymbol{P_{d}^{\rm{min}}}$ & $\boldsymbol{P_{d}^{\rm{max}}}$ & \textbf{$\boldsymbol{u_D}$} & $\boldsymbol{P_{g}^{\rm{min}}}$ & $\boldsymbol{P_{g}^{\rm{max}}}$ & \textbf{$\boldsymbol{c_G}$} &  \textbf{$\boldsymbol{e_G}$} \\
        \hline
        \textbf{1} & 0 & 15 & 18 & 0 & 20 & 8 & 0.6 \\
        \textbf{2} & 0 & 15 & 18 & 0 & 10 & 10 & 0.2 \\
        \textbf{3} & 0 & 15 & 18 & 0 & 25 & 6 & 1 \\
        \hline
    \end{tabular}
\end{table}
\vspace{-0.3cm}


\subsection{Simplified Three-bus Illustration Example}
We first consider a simplified three-bus system with one generator and one consumer connected at each bus, adapted from Example 6.2.2 in \cite{gabriel2012complementarity}.
We define carbon emission factors $e_G$ such that generator 3, the cheapest generator, is also the most emission-intense. To highlight the impact of consumer carbon costs on results, we harmonize the load parameters to $P_{D}^{\min}=0$MW, $P_{D}^{\max}=15$MW and $u_D=\$18$/MWh for all loads. The system parameters are summarized in Table \ref{tab: 3-bus-params}. 
{\blue Note that, to more clearly highlight the impacts of introducing consumer-based carbon costs and carbon allocation in the market model, we consider an uncongested system (without active transmission constraints) in this example. As a result, the electricity price $\lambda_P$ will be uniform throughout the system. 
For results highlighting the impact of congestion, the reader is referred to the case study involving the RTS-GMLC system below.
}

\begin{table*}[t]
\footnotesize
\centering
\caption{\small Impact of Carbon Costs on Power Dispatch and Emissions. Bolded carbon-adjusted prices indicate prices that are equal to the generator cost or demand utility.}
\label{tab:carbon_pricing_results}

\begin{tabular}{>{\centering\arraybackslash}p{1.5cm}|
                >{\centering\arraybackslash}p{0.5cm} >{\centering\arraybackslash}p{0.5cm} >{\centering\arraybackslash}p{0.5cm}|
                >{\centering\arraybackslash}p{0.5cm} >{\centering\arraybackslash}p{0.5cm} >{\centering\arraybackslash}p{0.5cm}|
                >{\centering\arraybackslash}p{0.5cm} >{\centering\arraybackslash}p{0.5cm} >{\centering\arraybackslash}p{0.5cm}|
                >{\centering\arraybackslash}p{1.2cm}| >{\centering\arraybackslash}p{1.2cm}|
                >{\centering\arraybackslash}p{0.5cm} >{\centering\arraybackslash}p{0.5cm} >{\centering\arraybackslash}p{0.5cm}|
                >{\centering\arraybackslash}p{0.5cm} >{\centering\arraybackslash}p{0.5cm} >{\centering\arraybackslash}p{0.5cm}|
                >{\centering\arraybackslash}p{1cm}| >{\centering\arraybackslash}p{1cm}|
                >{\centering\arraybackslash}p{0.5cm} >{\centering\arraybackslash}p{0.5cm} >{\centering\arraybackslash}p{0.5cm}}
\hline
\multirow{5}{1.5cm}{\centering \textbf{Case definition}} & \multicolumn{3}{>{\centering\arraybackslash}p{1.5cm}|}{\multirow{4}{1.5cm}{\centering \textbf{Carbon cost [\$/CO$_2$]}}} & \multicolumn{3}{>{\centering\arraybackslash}p{1.5cm}|}{\multirow{4}{1.5cm}{\centering \textbf{Load dispatch [MW]}}} & \multicolumn{3}{>{\centering\arraybackslash}p{1.5cm}|}{\multirow{4}{1.5cm}{\centering \textbf{Gen. dispatch [MW]}}} & \multirow{5}{1cm}{\centering \textbf{Total load/ gen. [MWh]}} & \multirow{5}{1cm}{\centering \textbf{Total gen. cost [\$]}} & \multicolumn{3}{>{\centering\arraybackslash}p{1.5cm}|}{\textbf{Gen. carbon-adj. price [\$/MWh]} } & \multicolumn{3}{>{\centering\arraybackslash}p{1.5cm}|}{\textbf{Load carbon-adj. price [\$/MWh]}} & \multirow{5}{1cm}{\centering\textbf{Total E. [tCO$_2$]}} & \multirow{5}{1cm}{\centering \textbf{Avg E. [tCO$_2$ /MWh]}} & \multicolumn{3}{>{\centering\arraybackslash}p{1.5cm}}{\multirow{4}{1.5cm}{\centering \textbf{Emission allocation [tCO$_2$]}}} \\
\cline{2-10} \cline{13-18} \cline{21-23}
& $d_1$ & $d_2$ & $d_3$ & $d_1$ & $d_2$ & $d_3$ & $g_1$ & $g_2$ & $g_3$ & & & $g_1$ & $g_2$ &$g_3$ & $d_1$ & $d_2$ & $d_3$ & & & $d_1$ & $d_2$ & $d_3$ \\
\hline
\textbf{Standard} & 0 & 0 & 0 & 15 & 15 & 15 & 20 & 0 & 25 & 45 & 310 & \multicolumn{6}{>{\centering\arraybackslash}p{3cm}|}{10} & 37 & 0.82 &  & N/A &  \\
\hline
\textbf{Carbon tax} & 15 & 15 & 15 & 15 & 10 & 5 & 20 & 10 & 0 & 30 & 260 & 9 & 15 & 3 & \textbf{18} & \textbf{18} & \textbf{18} & 14 & 0.47 & 9 & 2 & 3 \\
\hline
\multirow[=]{6}{1.5cm}{\centering \textbf{Non-uniform\\carbon\\costs}} 
 & 0 & 15 & 0  & 15 & 15 & 15 & 10 & 10 & 25 & 45 & 330 & \textbf{8} & 14 & 8 & 8 & 17 & 8  & 33 & 0.73 & 13 & 5 & 15 \\
 & 0 & 15 & 5  & 15 & 15 & 15 & 20 & 10 & 15 & 45 & 350 & \textbf{8} & 14 & \textbf{6} & 6 & 17 & 11 & 29 & 0.64 & 15 & 5 & 9 \\
 & 0 & 15 & 10 & 15 & 15 & 15 & 20 & 10 & 15 & 45 & 350 & 9 & 15 & \textbf{6} & 6 & \textbf{18} & 15  & 29 & 0.64 & 15 & 5 & 9 \\
 & 0 & 15 & 15 & 15 & 15 & 15 & 20 & 10 & 15 & 45 & 350 & 9 & 15 & \textbf{6} & 6 & \textbf{18} & \textbf{18} & 29 & 0.64 & 15 & 9 & 5 \\
 & 0 & 15 & 20 & 15 & 15 & 10  & 15 & 10 & 15 & \textcolor{black}{40} & \textcolor{black}{310} & \textbf{8} & 14 & \textbf{6} & 6 & 17 & \textbf{18} & \textcolor{black}{26} & \textcolor{black}{0.65}& 15 & 9 & 2 \\
 & 0 & 15 & 25 & 15 & 15 & 0  & 5  & 10 & 15 & 30 & 230 & \textbf{8} & 14 & \textbf{6} & 6 & 17 & 19 & 20 & 0.67 & 15 & 5 & 0 \\
\hline
\end{tabular}
\end{table*}

\subsubsection{Impact of Carbon Costs on Market Clearing}
We first provide an example of how consumer-based carbon costs impact the market-clearing outcomes. Specifically, we compare market-clearing outcomes with zero carbon costs $c_D = [0,0,0]$ (equivalent to a standard, carbon-agnostic market clearing) to outcomes with uniformly high carbon costs $c_D = [15, 15, 15]$ (i.e. equivalent to adding a unifying carbon tax) and our proposed model with non-uniform carbon costs. 
The results are listed in Table \ref{tab:carbon_pricing_results}.

\begin{figure}[t]
  \centering
  \begin{subfigure}{\linewidth}
    \centering
    \vspace{-0.2cm}
\includegraphics[width=\linewidth]{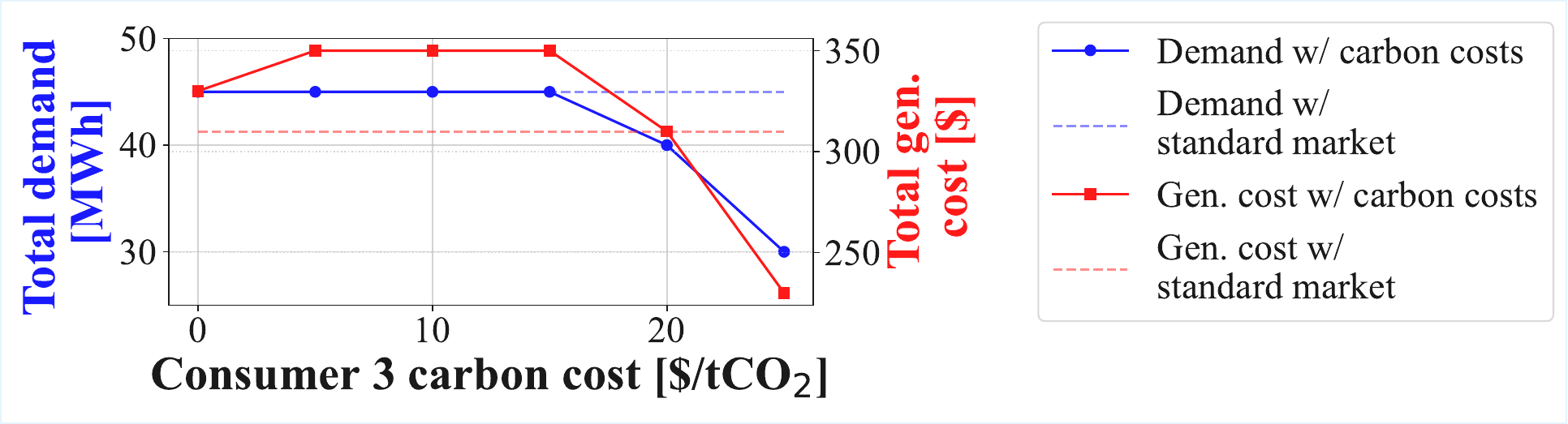}
    \caption{\small Total demand (red) and total generation cost (blue). Dashed lines indicate demand and generation cost from the standard market.} 
    \vspace{+2pt}
    \label{fig:totaldemand}
  \end{subfigure}
   
  \begin{subfigure}{\linewidth}
    \centering
\includegraphics[width=\linewidth]{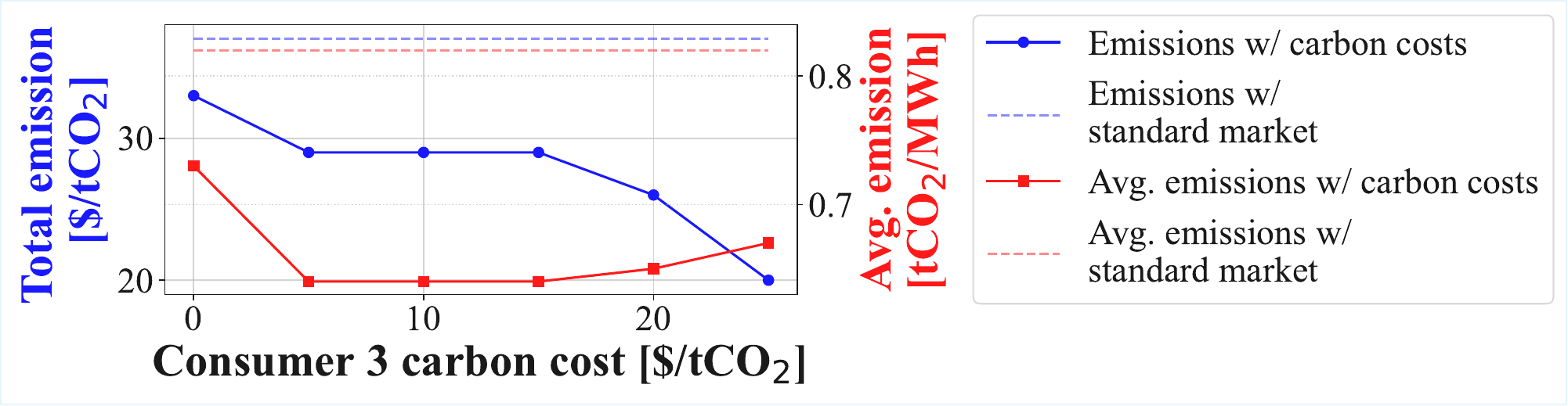}
    \caption{\small Total (blue) and average (red) carbon emissions. Dashed lines show emissions from the standard market clearing.}
    \vspace{+2pt}
    \label{fig:Emissions}
  \end{subfigure}
  
  \begin{subfigure}{\linewidth}
    \centering
\includegraphics[width=\linewidth]{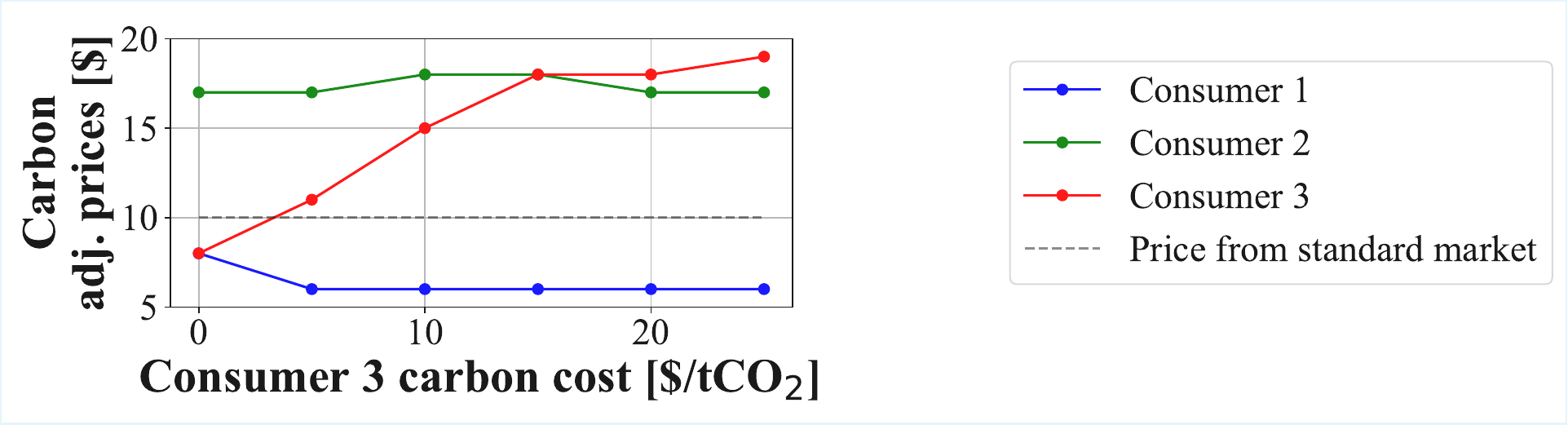}
    \caption{\small Carbon-adjusted prices for Consumer 1 (blue), Consumer 2 (green) and Consumer 3 (red). Black dashed line indicates price from standard market clearing.}
\label{fig:DemandAdjPrices}
  \end{subfigure}

  \caption{\small Impact of increasing carbon costs of Consumer 3 on market clearing results.}
      \vspace{-0.5cm}
  \label{fig:impactc}
\end{figure}

\begin{table*}[htbp]
    \footnotesize
    \centering
    \renewcommand{\arraystretch}{1}
    \setlength{\tabcolsep}{1pt}
    \caption{\small Comparison of Multiple Optimal Solutions for $c_D = \{0,15,20\}$. Values common to all three solutions are only listed for S2.}
    \label{tab:model_comparison}
    \begin{tabular}{>{\centering\arraybackslash}p{2cm}|
                >{\centering\arraybackslash}p{1.75cm} >{\centering\arraybackslash}p{1.75cm} >{\centering\arraybackslash}p{1.75cm}
                >{\centering\arraybackslash}p{1.75cm}| >{\centering\arraybackslash}p{0.5cm} >{\centering\arraybackslash}p{0.5cm}
                >{\centering\arraybackslash}p{0.5cm}| >{\centering\arraybackslash}p{0.5cm} >{\centering\arraybackslash}p{0.5cm}
                >{\centering\arraybackslash}p{0.5cm}|
                >{\centering\arraybackslash}p{0.5cm} >{\centering\arraybackslash}p{0.5cm}
                >{\centering\arraybackslash}p{0.5cm}| >{\centering\arraybackslash}p{0.5cm} >{\centering\arraybackslash}p{0.5cm}
                >{\centering\arraybackslash}p{0.5cm}}
                \hline
                \multirow{3}{2cm}{\centering \textbf{Solutions}} & \multirow{3}{1.75cm}{\centering \textbf{Objective value [\$]}} &  \multirow{3}{1.75cm}{\centering \textbf{Consumer utility [\$]}} & \multirow{3}{1.75cm}{\centering \textbf{Carbon cost [\$]}} &  \multirow{3}{1.75cm}{\centering \textbf{Generation cost[\$]}} & \multicolumn{6}{>{\centering\arraybackslash}p{3.6cm}|}{\textbf{Power consumption $P_D$ and generation $P_G$[MWh]}} & \multicolumn{6}{>{\centering\arraybackslash}p{3cm}}{\multirow{2}{3cm}{\centering\textbf{Carbon-adjusted prices [\$/MWh]}}}\\
                \cline{6-11} \cline{12-17}
                & & & & & $d_1$ & $d_2$ & $d_3$ & $g_1$ & $g_2$ & $g_3$ & $g_1$ & $g_2$ & $g_3$ & $d_1$ & $d_2$ & $d_3$ \\
                \hline
                \textbf{S1} &\multirow{3}{*}{235} & 540& 75& 230 & \multirow{3}{*}{15} & \multirow{3}{*}{15} & 0 & 5 & \multirow{3}{*}{10} & \multirow{3}{*}{15} & \multirow{3}{*}{8} & \multirow{3}{*}{14} & \multirow{3}{*}{6} & \multirow{3}{*}{6} & \multirow{3}{*}{17} & \multirow{3}{*}{18} \\
                \textbf{S2} & &583 & 99& 249 &  &  & 2.4 & 7.4 &  &  &  &  &  &  &  &  \\
                \textbf{S3} & &720&175&310& & &10& 15 & & & & & & & &\\
                \hline
\end{tabular}
\end{table*}

The standard market clearing with carbon costs $c_D = [0,0,0]$ dispatches the cheapest and most polluting generators. All generators and consumers have an electricity price of $\lambda_P = \$10$/MWh, with zero carbon adjustments. 
The total load is 45MWh with a generation cost of \$310, with total emissions of 37 tCO$_2$ and average emissions of 0.82 tCO$_2$/MWh.

With a uniform carbon costs $c_D = [15, 15, 15]$, corresponding to a carbon tax of \$15/tCO$_2$, we observe that 
each generator receives differing carbon-adjusted prices
with the more polluting generators facing lower prices. 
Generator 3, with the highest emissions, produces no power as its carbon-adjusted price is lower than the generation cost.
Generators 1 and 2 produce their maximum amount of power, leading to 30 MWh of generation. 
Due to the uniform carbon price, the carbon-adjusted electricity price of $\lambda_P - \lambda_D= \$18$/MWh is the same for all consumers, and equals the consumer utility $u_D= \$18$/MWh. 
The total generation cost is \$260, while the total and average carbon emissions are reduced to 14 tCO$_2$ and 0.47 tCO$_2$/MWh, respectively. 
Compared to the standard market clearing, the carbon tax reduces emissions both by 
prioritizing lower emitting generators
and by reducing load.


We next investigate the impact of non-uniform carbon costs. 
We fix the carbon costs of consumers 1 and 2 to \$0/tCO$_2$ and \$15/tCO$_2$, respectively, and vary the carbon costs of consumer 3 between \$0/tCO$_2$ and \$25/tCO$_2$. 
The results  are shown in Table \ref{tab:carbon_pricing_results} and illustrated in Fig.\ref{fig:impactc}. 
Fig.\ref{fig:totaldemand} shows the total demand (in blue) and generation cost (in red)
as we change the carbon cost of consumer 3, while Fig.\ref{fig:Emissions} show the total and average carbon emissions. 
{\blue At carbon costs below $\$15$/MWh, we observe that the total generation cost first increases and then remains stable at \$350, while the total (and average) emissions first decrease and then remain stable at 29 tCO$_2$ (or 0.64 tCO$_2$/MWh) as the carbon cost of consumer 3 increases.} 
The load in this carbon cost range remains constant, indicating that the emission reductions are due to generation redispatch rather than load reduction. At carbon costs greater than $\$15$/MWh, consumer 3 lowers its consumption, leading to a reduction of both total generation cost and total carbon emissions. However, the average emissions increase as the remaining loads are served by less expensive, but more polluting generators.

{\blue Fig.\ref{fig:DemandAdjPrices}} shows the carbon-adjusted prices for each load. 
We observe that the loads have different carbon-adjusted prices due to their different carbon costs, and that the carbon-adjusted price is consistently highest for the consumer with the highest carbon cost\footnote{This is as expected since $\lambda_P$ is the same across all loads and generators in our system (due to the lack of congestion). The only differentiating factor between prices is thus the carbon-adjustment, which according to Theorem \ref{theoremodv} causes larger price increases for loads with higher carbon costs.}.
Specifically, consumers 2 and 3 with non-zero carbon cost pay a \emph{higher} price for their electricity compared to the standard market clearing, 
while consumer 1 with zero carbon cost experiences a \emph{reduction} in the (carbon-adjusted) electricity price from \$10/MWh to \$6/MWh. 
\textcolor{black}{This price reduction for the carbon-agnostic consumer happens because the carbon costs submitted by carbon-sensitive consumers reduces the demand for (low-cost) carbon-intense generation, leading to a system-wide redispatch where the marginal cost of serving the carbon-agnostic consumer is lower.}
The increase in total generation cost due to introducing carbon costs is thus primarily allocated to loads with non-zero carbon costs.




\subsubsection{\textcolor{black}{Multiple Optimal Solutions}}
We next compare multiple optimal solutions obtained from the carbon-aware model
for the case with carbon costs $c_D = [0,15,20]$. The results are shown in Table \ref{tab:model_comparison}. 
We observe that all solutions achieve the same total objective value and the same carbon-adjusted prices. However, the values for $P_G$ and $P_D$ are different, indicating that there are multiple optimal solutions with different splits between the different components of the
objective value, \textcolor{black}{i.e. different values of consumer utility, generation cost, and carbon cost. 
We further observe that the carbon-adjusted prices of several consumers and generators are equal to their respective utility or generation costs, indicating that there are several individual actors who have multiple optimal solutions in their individual profit maximization problems.}

\textcolor{black}{Specifically, in this case, the carbon-adjusted prices for consumer $3$ and generator $1$ are equal to their utility and generation cost coefficients,
meaning that the objective functions of their individual optimization problems are zero. Thus, any value of demand $d_3$ and generation $g_1$ within their respective bounds is optimal, meaning that \emph{both} demand $d_3$ and generation $g_1$ are marginal actors. Since we have two such marginal actors, any change $\Delta$ for which both the demand $P_{D,3}+\Delta$ and generation $P_{G,1}+\Delta$ remain within the bounds represents an optimal solution.}

\textcolor{black}{In a standard carbon-agnostic electricity market without transmission constraints, we have at most one electricity price and one marginal generator or load (unless there are multiple generators or loads whose cost coefficient matches the electricity price). In contrast, in the carbon-aware electricity market, there are multiple carbon-adjusted prices. This provides an opportunity for multiple marginal generators and loads, and thus more frequent presence of multiple solutions. The bolded carbon-adjusted prices in Table \ref{tab:carbon_pricing_results}, which highlight cases where the carbon-adjusted prices equal to the demand utility or generation cost, demonstrate the relatively frequent presence of having both marginal loads and marginal generators in the optimal solution. Stated differently, the introduction of a new term $c_D^\intercal E_D$ in the cost function of the centralized optimization problem provides more opportunities to trade off between consumer utility, carbon costs and generation cost compared to a standard electricity market.}

\subsection{Extension to RTS-GMLC System}
Next, we examine how our model impacts market clearing outcomes in the RTS-GMLC system \cite{barrows2019ieee} 
with 73 buses, 120 branches, 158 generators, and 51 loads. 
We assign emission factors for each generator based on their assigned fuel type and data from the US Department of Energy \cite{emissiondata}. Specifically, we assign $e_G=\{0.6042, 0.7434, 0.9606\}$ for natural gas, oil and coal generators, respectively, and assume that solar, wind and hydro have $e_G=0$.
The generation costs lie in the range of $[0,74.64]$\$/MWh, where the renewable generators have a cost of 0\$/MWh. 
We draw consumer utility values $u_D$ from a uniform distribution in the range of $[20,80]$~\$/MWh.
We assign consumer carbon costs $c_D$ by first drawing values from a uniform distribution in a range of $[10, 30]$ \$/MWh, and then randomly assigning zero carbon costs to 25\% of the consumers to simulate carbon-agnostic consumers. 
We further set $P_D^{\min}=0.8\times P_d$ and $P_D^{\max}=1.2\times P_d$, where $P_d$ is the baseline load value, 
\textcolor{black}{and we model congestion by incorporating transmission line limits to the RTS-GMLC system.}

\begin{figure}[t]
  \centering
  \begin{subfigure}{0.49\linewidth}
    \centering
    \includegraphics[width=\linewidth]{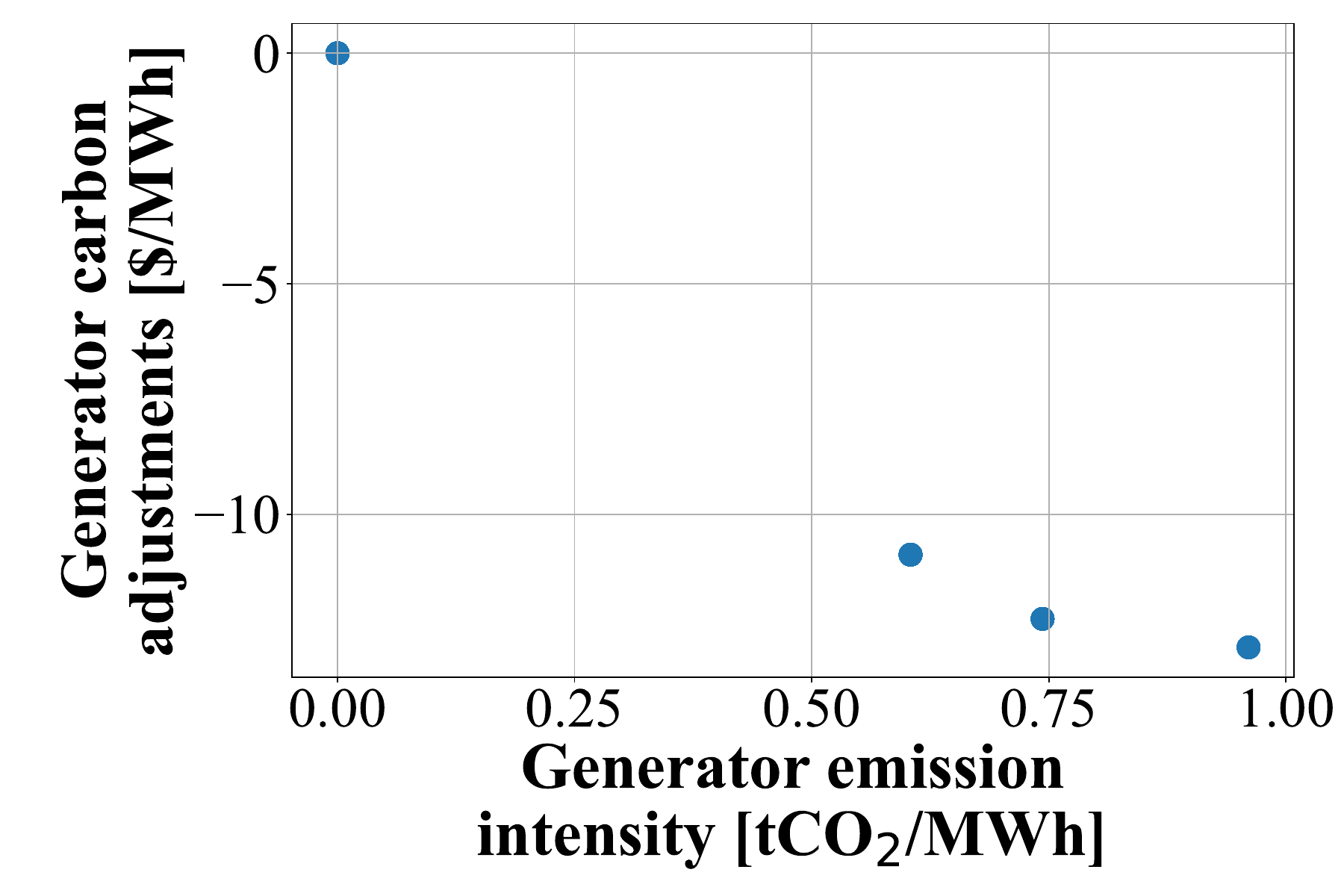}
    \caption{Generator $\lambda_G$ vs $e_G$.}
    \label{fig:GenAdj-vs-Em}
  \end{subfigure}
    \hfill
  \begin{subfigure}{0.49\linewidth}
    \centering
    \includegraphics[width=\linewidth]{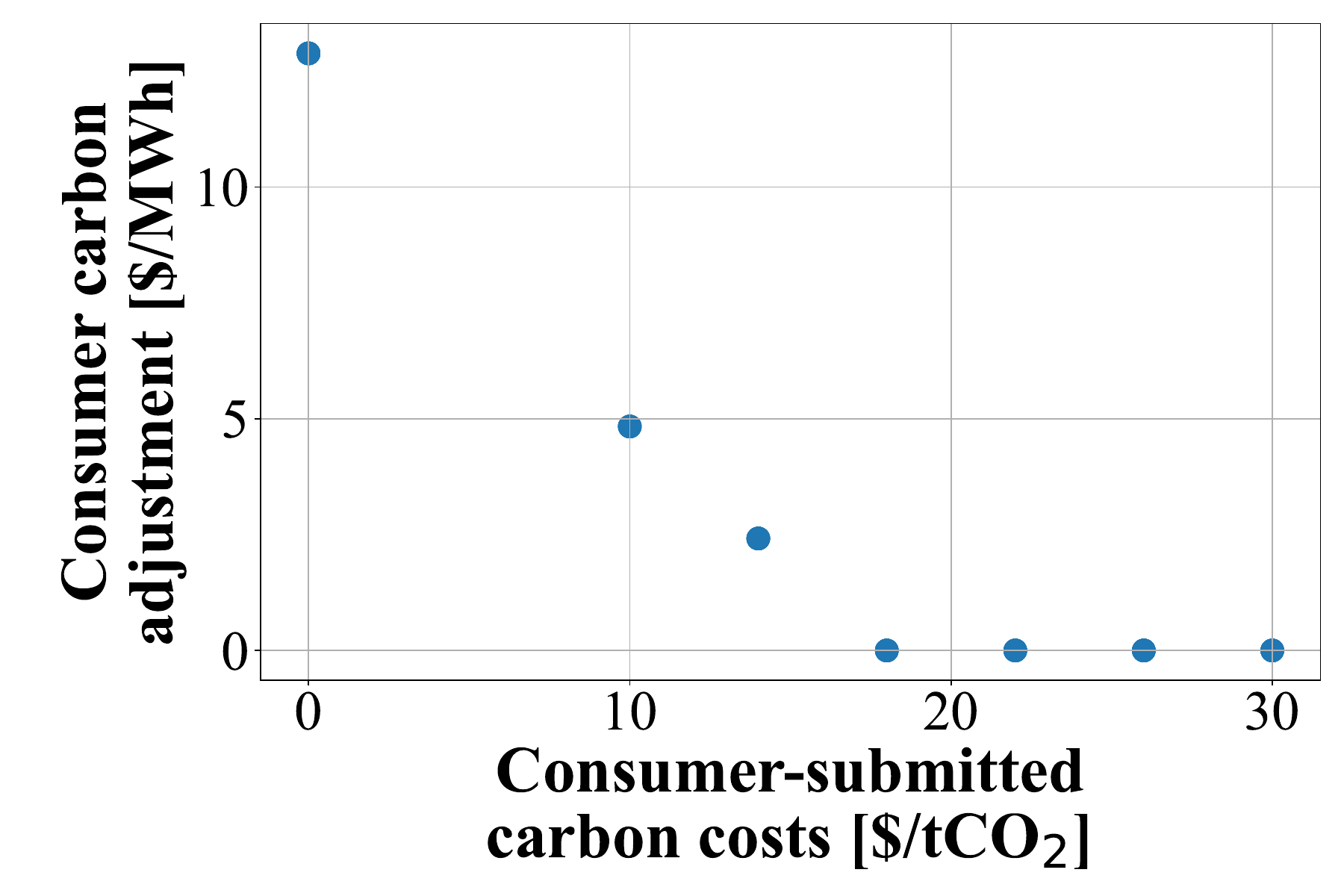}
    \caption{Consumer $\lambda_D$ vs $c_D$.}
    \label{fig:DemandAdj-vs-Cd}
  \end{subfigure}

  \caption{\small Relationship between the generator carbon-adjustments $\lambda_G$ and emission factors $e_G$ (left) and consumer carbon adjustments $\lambda_D$ and carbon costs $c_D$ (right).}
  \label{fig:Price-Adjustments}
\end{figure}
\begin{figure}[t]
  \centering
  \begin{subfigure}{0.49\linewidth}
    \centering
    \includegraphics[width=\linewidth]{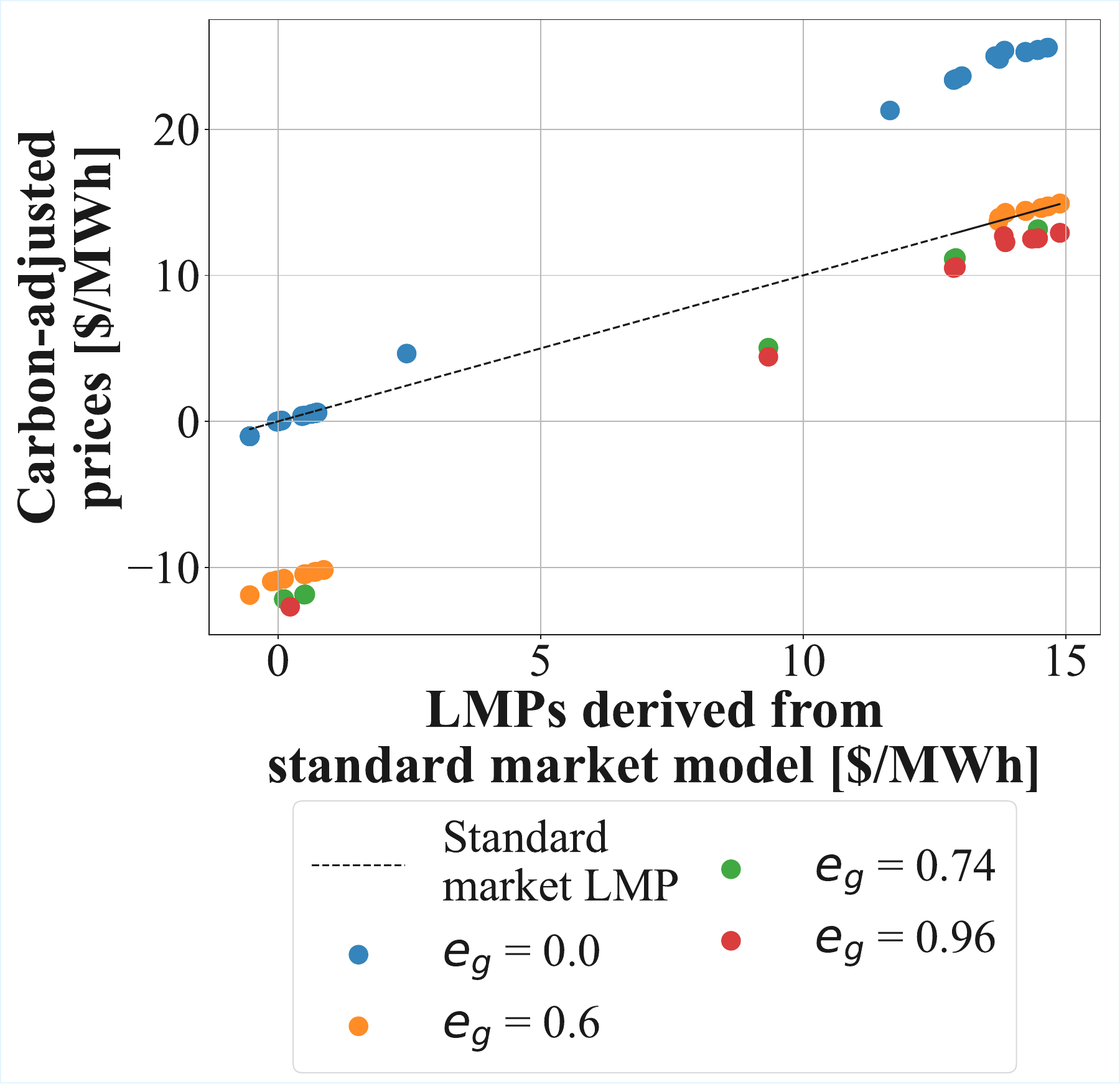}
    \caption{Generator prices}
    \label{fig:GenAdjPrices}
  \end{subfigure}
    \hfill
  \begin{subfigure}{0.49\linewidth}
    \centering
    \includegraphics[width=\linewidth]{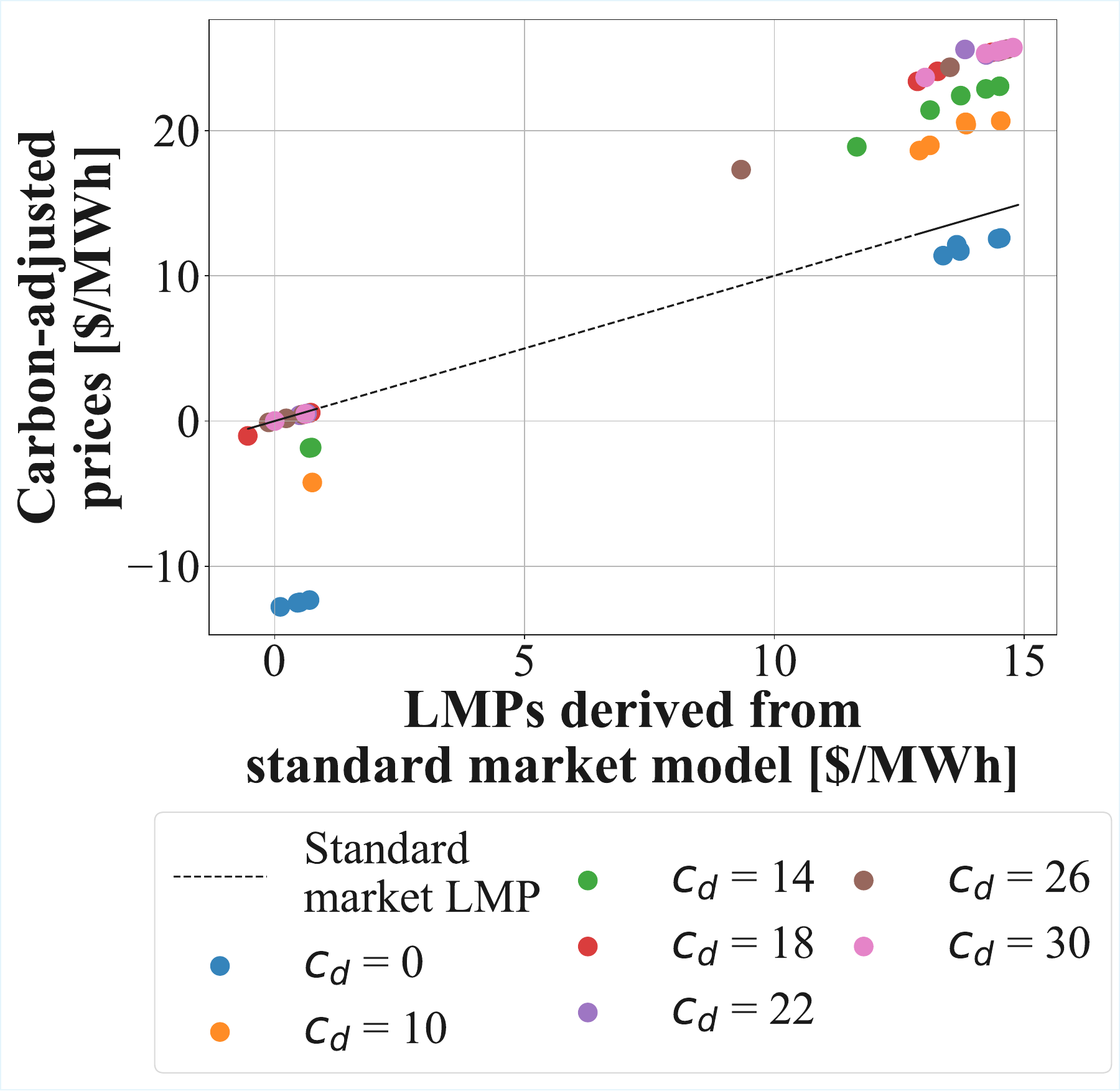}
    \caption{Consumer prices}
    \label{fig:DemandAdjPrices2}
  \end{subfigure}
  \caption{\small Comparing carbon-adjusted and carbon-agnostic prices.}
  \label{fig:AdjustedPrices}
\end{figure}

\subsubsection{Ordering of $\lambda_{G}$ and $\lambda_D$}
\label{suborder}
We first evaluate the relationship between generator carbon adjustments $\lambda_G$ and emission factors $e_G$, shown in 
Fig.~\ref{fig:GenAdj-vs-Em}.
From Fig.~\ref{fig:GenAdj-vs-Em}, we observe that generator carbon adjustments $\lambda_G$ decrease as the emission intensities $e_G$ increase. 
Further, generators with the same emission intensities $e_G$ have the same carbon adjustment $\lambda_G$, i.e. there are only four points corresponding to each generator type in Fig.~\ref{fig:GenAdj-vs-Em}, as expected from Corollary IV.3. 

We next assess the relationship between consumer carbon costs $c_D$ and carbon adjustments $\lambda_D$, shown in Fig.~\ref{fig:DemandAdj-vs-Cd}. We observe that carbon adjustments $\lambda_D$ decrease as the carbon costs $c_D$ increase, until $c_D$ reach \$18/tCO$_2$. For carbon costs $c_D\geq$~\$18/tCO$_2$, the carbon adjustments $\lambda_D$ become equal to zero. This is because these consumers are all served by renewable generators with $e_G = 0$, and thus have the same carbon adjustments $\lambda_D$ as implied by Corollary IV.4. 

\subsubsection{Carbon-Adjusted Prices}
The carbon adjustments impact the prices for generators and consumers, with higher adjustments suggesting higher payments (for generators) or lower electricity costs (for consumers). However, the electricity price is also impacted by congestion in the system, which is reflected in variations of $\lambda_P$. \textcolor{black}{To provide an example of how carbon costs impact a congested system,} we next consider how the carbon-adjusted prices $\lambda_P+\lambda_G$ for generators and $\lambda_P+\lambda_D$ compare with LMPs obtained from a standard market clearing. 

Fig. \ref{fig:AdjustedPrices} plots the carbon-adjusted prices for generators (left) and consumers (right) against the LMPs from the standard market clearing. {\blue The horizontal and vertical coordinates of each circle represent the LMP and carbon-adjusted price, respectively. Circles above (or below) the black line indicate generators or loads that experience higher (or lower) prices after carbon adjustment.
Further, Fig. \ref{fig:RTS-GMLC} shows where the congested lines are located (in red), as well as the geographical variation in differences between carbon-adjusted electricity prices and standard (carbon-agnostic) LMPs for different consumers and generators. 


From these figures, we observe that the prices are clustered into a lower cost group (generators and consumers with lower LMPs in Fig. \ref{fig:AdjustedPrices} and light blue circles in Fig. \ref{fig:RTS-GMLC}) and higher cost group (generators and consumers with higher LMPs in Fig. \ref{fig:AdjustedPrices} and dark blue circles in Fig. \ref{fig:RTS-GMLC}). As expected, within each range, generators with lower emissions receive higher prices (and thus larger payments) while consumers with higher carbon costs have to pay higher prices. However, the relative change in price compared to the LMPs from the standard market clearing is different in each range. From Fig. \ref{fig:AdjustedPrices}, we observe that for generators and consumers connected to nodes with a lower price, the carbon-adjusted price tends to be lower than the standard LMP. In contrast, generators and consumers connected at nodes in the high price range tend to experience carbon-adjusted prices that are higher than the standard LMP. This suggests that the consideration of consumer carbon costs has increased the impact of congestion compared to the carbon-agnostic case in this example. 


\begin{figure}[t]
    \centering
    \includegraphics[width=0.95\linewidth]{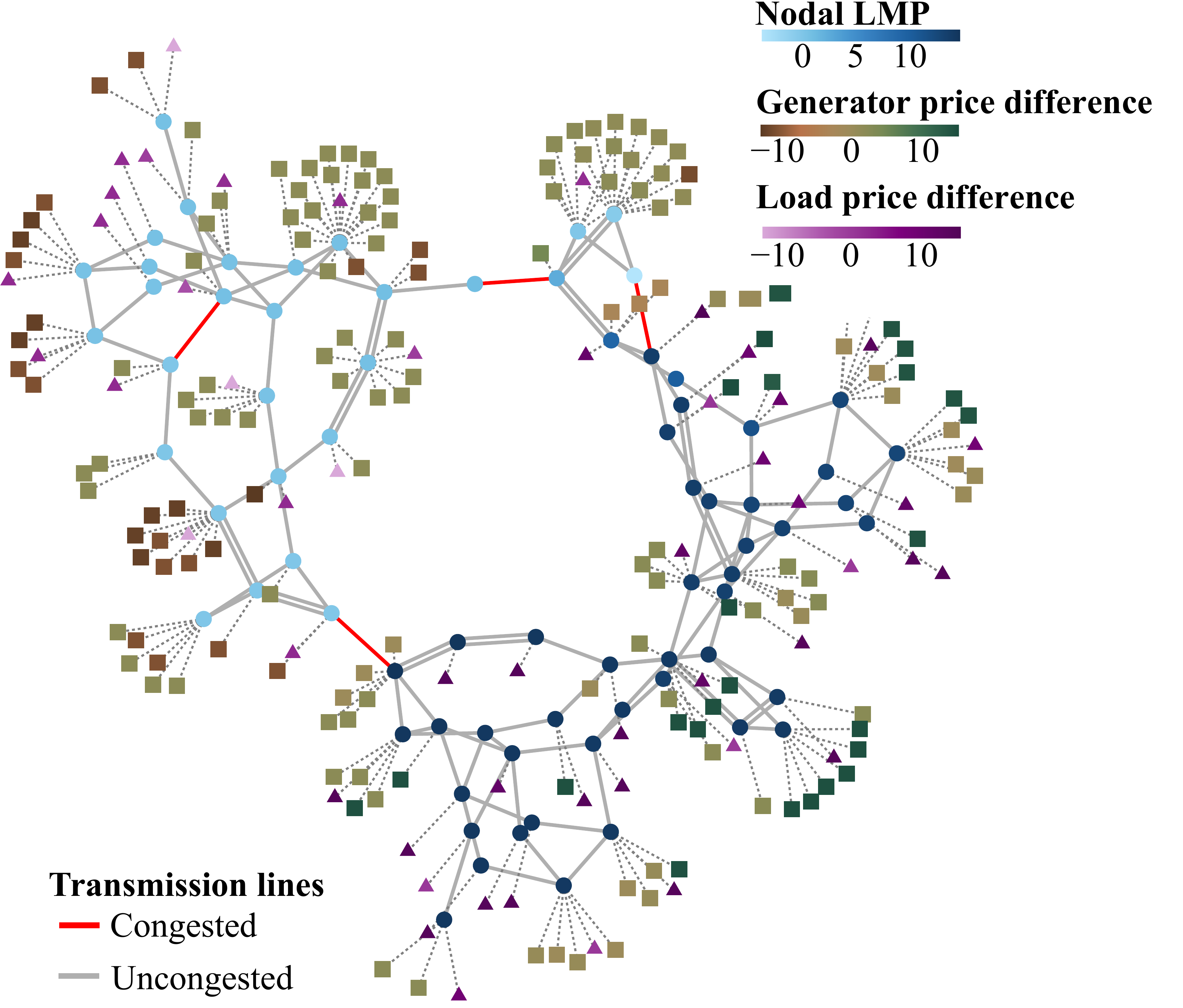}
    \caption{{\blue \small Changes in electricity prices with the introduction of carbon cost. Each circle represents a bus node, with color intensity representing standard (carbon-agnostic) LMPs. The squares connected to each node represent generators, where the color intensity reflects differences between generators' carbon-adjusted prices and standard LMP. Similarly, the triangles attached to each node represent consumers, with color intensity corresponding to their respective differences between consumers' carbon-adjusted prices and standard LMP. Congested transmission lines in carbon-aware case are highlighted by the red color (the uncongested lines are in gray).}}
    \label{fig:RTS-GMLC}
\end{figure}

To further understand this observation, we investigate the set of congested lines and the value of the dual variables associated with the corresponding transmission constraints in each market clearing case. In the standard (carbon-agnostic) electricity market, the congested lines are 119, 85, and 30 with dual variables at -16.6, -1.1, and 24.5. In the market clearing with carbon costs, the congested lines are 119, 85, 30 and 118 with dual variables $\overline{\eta}_{L,ij}$ or $\underline{\eta}_{L,ij}$ at -27.1, -0.9, 45.9 and -2.7, respectively. These findings demonstrate that the impact of congestion on the total operating cost increases when carbon cost is introduced, as indicated by an additional congested line and larger absolute values of the dual variables (with most lines). 
This is not unexpected, as our model further incentivizes the use of low cost renewable generation that is also low-carbon.
} 



{\blue \subsubsection{The Impact of Increasing Carbon Costs}
To investigate the impacts of increasing carbon costs on market clearing results, we keep the same setting as before but change the carbon costs by using different scale factors from 1 to 3. The comparison results are shown in Table \ref{tabIncrcarbon}, with the percentages in brackets indicating the reduction relative to the baseline (unscaled) case in the first row. We observe that with increasing carbon costs, Total Carbon, Total Generation Cost, and Total Generation decrease. 
The decrease in total generation is due to a reduction in demand, as increasing carbon costs cause more consumers to consume at the lowest demands.  
However, the Total Generation decreases more slowly compared to Total Carbon (and Total Generation Cost), suggesting that the average carbon emissions (i.e. emissions per MWh) are reduced as the carbon cost increases. 
}

\begin{table}[t]\footnotesize
\renewcommand{\arraystretch}{1.1}
\setlength{\tabcolsep}{3.2pt}
\centering
\caption{The Impacts of Increasing Carbon Costs on Market Clearing Results.}
\label{tabIncrcarbon}
\begin{tabular}{l|l|l|l}
\hline
Scale Factors& \begin{tabular}[c]{@{}c@{}}Total Carbon\\~[tCO$_2$]\end{tabular}  &\begin{tabular}[c]{@{}c@{}}Total Generation \\ Cost [\$]\end{tabular} &\begin{tabular}[c]{@{}c@{}}Total Generation\\~[MWh]\end{tabular}\\
\hline
\hline
Baseline &3592.2	&74189.3 &10230.4	\\
x 1.5	&3475.8 (-3.24\%)	&71945.1 (-3.02\%)	&10062.4 (-1.64\%)	\\
x 2	&3258.8 (-9.28\%)	&67728.7 (-8.71\%)	&~9746.7 (-4.73\%)	\\
x 2.5	&2543.5 (-29.19\%)	&58479.3 (-21.18\%)	&~8950.6 (-12.51\%)	\\
x 3	&2371.0 (-34.0\%)	&56592.9 (-23.72\%)	&~8762.6 (-14.35\%)	\\
\hline
\end{tabular}
\end{table}


\begin{table}
\footnotesize
\renewcommand{\arraystretch}{1.1}
\setlength{\tabcolsep}{3.2pt}
\centering
\caption{Yearly Average Metrics of Standard and Carbon-Aware Markets.}
\label{tab11}
\begin{tabular}{l|l|l|l}
\hline
Cases& \begin{tabular}[c]{@{}c@{}}Total Carbon\\~[tCO$_2$]\end{tabular}  &\begin{tabular}[c]{@{}c@{}}Total Generation \\ Cost [\$]\end{tabular} &\begin{tabular}[c]{@{}c@{}}Total Generation\\~[MWh]\end{tabular}\\
\hline
\hline
Standard&2850.78	&64489.33&	7653.21\\
Carbon-Aware	&2810.03 (-1.43\%)	&63608.14 (-1.37\%)	&7598.42 (-0.72\%)	\\
\hline
\end{tabular}
\end{table}

\begin{figure*}
    \centering
    \includegraphics[width=\linewidth]{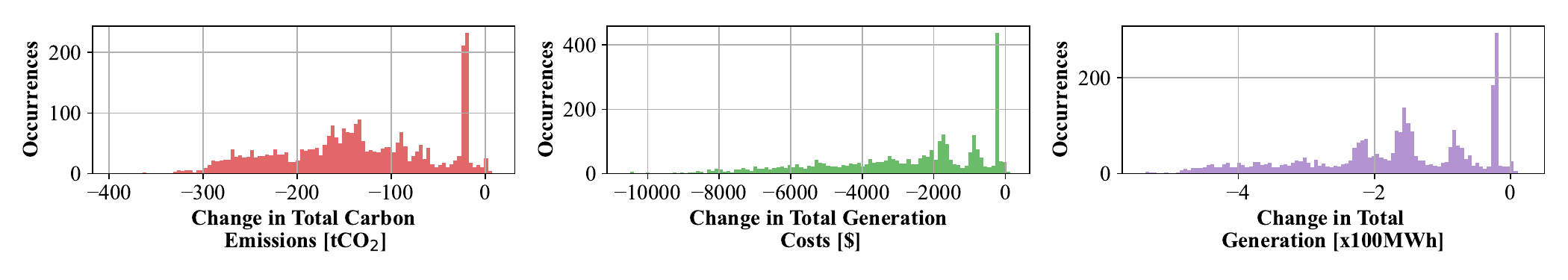}
    \caption{{\blue \small Changes of Total carbon emissions, Total generation cost, and Total generation using yearly data for the RTS-GMLC system.}}
    \label{fig:YearAnalysis}
\end{figure*}

{\blue \subsubsection{Yearly Data Analysis}

One year’s worth of hourly load and generation data is available for the RTS-GMLC system, totaling 8,784 cases. We assign consumer carbon costs $c_D$ for each consumer by first drawing values from a uniform distribution in a range of $[20, 60]$ \$/MWh, then randomly setting zero carbon costs to 25\% of the consumers to simulate carbon-agnostic consumers. These costs remain fixed for all hours. We then use the day-ahead load and renewable energy data from the RTS-GMLC data to calculate total carbon emissions, generation cost, and generation setpoints on an hourly basis. Across the full year, we observe 2,993 hours where our carbon-aware market clearing differs from the standard (carbon-agnostic) market clearing.

Table \ref{tab11} shows the yearly average metrics, calculated across all 8,764 hours, with the
percentages in brackets indicating the reduction relative to the
standard case. We observe that total carbon emissions, generation cost, and generation are reduced compared with the standard market clearing. The reduction in total generation indicates that overall demand has decreased due to the introduction of carbon costs. However, the percentage reduction in emissions and cost is larger than the percentage reduction in generation. This shows that the emission and cost reductions are not only a result of reduced demand, but that the introduction of carbon costs has also shifted generation from more polluting (and more expensive) to less polluting (and less expensive) generators. 

To investigate these effects in more detail, Fig. \ref{fig:YearAnalysis} shows the distribution of the hourly changes in emissions, generation cost and total generation (omitting the hours for which there are no changes).
While these figures demonstrate that there is generally a reduction in all metrics, Fig. \ref{fig:YearAnalysis} also shows that each metric may increase in specific instances.  
In particular, carbon emissions may sometimes increase in the carbon-aware market clearing compared with the standard market clearing. From a closer look at the results, we observe that this happens due to two main factors.
In some cases, the carbon-aware market clearing sometimes leads to an increase in system load, which raises total generation and emissions. In other cases, lower electricity prices for carbon-agnostic consumers relative to the standard market (as observed in Fig. \ref{fig:DemandAdjPrices2}) incentivize those consumers to increase their consumption, which is then met by carbon-intensive generators.}

\section{Conclusions}
\label{sec5}
In this paper, we analyze a recently proposed model to incorporate carbon allocation and consumer-side carbon costs into electricity market clearing. We derive an equivalent equilibrium formulation that gives rise to carbon-adjusted electricity prices and prove that the proposed carbon-cost based electricity market model satisfies similar market properties as current markets based on locational marginal pricing. Further, we show both theoretically and numerically that the proposed market clearing rewards low-emitting generators with higher electricity prices, and that increases in generation costs (compared with standard market clearing) are primarily allocated to consumers with higher carbon costs. 

This paper provides several avenues for future work. 
First, the proposed carbon allocation mechanism directly assigns power from generators to consumers while neglecting the physical characteristics of the electric grid and existing contractual agreements. As an extension, we plan to analyze the impacts of physical network constraints on power delivery and how existing power purchase agreements impact the carbon allocation. 
\textcolor{black}{
We are also interested in considering electricity market formulations with more complex constraints, such as those arising from unit commitment and the consideration of AC power flow. Extending the single optimization formulation is conceptually straightforward, and can be achieved by replacing the current DC OPF constraints with their corresponding AC power flow or unit commitment counterparts. However, more research is needed to extend the equilibrium analysis, as the introduction of non-convexities complicates the derivation and interpretation of dual variables. 
Further, formulations with multiple time steps are of interest as they allow for modeling of carbon-aware load shifting, where loads adapt their electricity use across time while keeping the overall consumption constant.} 
Finally, we plan to extend our analysis by investigating potential market manipulation opportunities, particularly in the elicitation and verification of carbon costs, further analysis of how costs and profits are allocated among participants,  \textcolor{black}{and exploring in greater detail the conditions that give rise to multiple optimal solutions.}

\bibliographystyle{IEEEtran}
\bibliography{ref}

\end{document}